\newenvironment{algorithm}[1]{
  \protected@edef\@currentlabelname{#1}
  \protected@edef\@currentlabel{#1}
  \begin{mdframed}[
    frametitle={#1},
    frametitlerule=true]
}{
  \end{mdframed}
}
\newcommand*{\printdate}{%
   \ifcase \month\or January\or February\or March\or April\or May\or June\or July\or
    August\or September\or October\or November\or December\fi \space \number\year}
\theoremstyle{plain}
\newtheorem{prop}{Proposition}
\newtheorem{theo}[prop]{Theorem}
\newtheorem{assumption}{Assumption} 
\newtheorem{lemma}{Lemma} 
\theoremstyle{definition}
\newtheorem{exam}{Example}
\newtheorem{defi}[exam]{Definition}
\theoremstyle{remark}
\newtheorem{rema}{Remark}
\author{Georgy Kalashnov\thanks{
Contact emails: Georgy Kalashnov (\texttt{gkalashnov@ucsd.edu}),
Evan Munro (\texttt{evan.munro@chicagobooth.edu}), and
Stefan Wager (\texttt{swager@stanford.edu}).} \\
Department of Economics, University of California, San Diego
\and
Evan Munro \\
Booth School of Business, University of Chicago
\and
Hao Sun \\
Didi Chuxing Technology Company
\and
Shuyang Du \\
Meta Platforms, Inc.
\and
Stefan Wager \\
Graduate School of Business, Stanford University}
\date{\printdate}
\title{Treatment Allocation under Uncertain Costs\thanks{
We are grateful for helpful discussions with Dmitry Arkhangelsky,
Susan Athey, Hamid Nazerzadeh, Liyang Sun and  Erik Sverdrup,
and for feedback from seminar participants at Stanford and Uber.}}
\begin{document}

\maketitle

\begin{abstract}
We consider the problem of learning how to optimally allocate treatments whose cost is uncertain and can vary with pre-treatment covariates. This setting may arise in medicine if we need to prioritize access to a scarce resource that different patients would use for different amounts of time, or in marketing if we want to target discounts whose cost to the company depends on how much the discounts are used. Here, we show that the optimal treatment allocation rule under budget constraints is a thresholding rule based on priority scores (those with a higher score are treated first), and we propose a number of practical methods for learning these priority scores using data from a randomized trial. Our formal results leverage a statistical connection between our problem and that of learning heterogeneous treatment effects under endogeneity using an instrumental variable. We find our method to perform well in a number of empirical evaluations.
\end{abstract}

\section{Introduction}

Data-driven resource allocation is increasingly prevalent across a number of fields.
One popular approach starts by modeling treatment heterogeneity. Given
a treatment (or intervention) and an outcome of interest, we also collect a large number of (pre-treatment)
covariates and seek to estimate how these covariates modulate the effect of the treatment on the outcome.
We then allocate treatment to those individuals who are predicted to respond most strongly to it based on their covariates.
As examples of this paradigm, in medicine, \citet*{basu2017detecting} consider assigning more
aggressive treatment to reduce blood pressure to cardiovascular disease patients who are estimated to benefit
from it the most; in marketing, \citet{ascarza2018retention} and \citet{lemmens2020managing} consider targeting retention offers to customers who are
estimated to be most responsive to them; while in economics,
\citet{kitagawa2018should} discuss prioritizing eligibility to job training programs to those job applicants who are
estimated to get the largest employment boost from the program.

One limitation of this line of work, however, is that existing methods for treatment personalization mostly do not consider the cost of assigning treatment. In all three cases considered above, this is not a problem: here,
treating any one specific person costs roughly the same as treating another, and so allocating treatment based on
estimated outcomes alone is valid. However, in many problem settings the cost of treating different people is not the same, and is unknown pre-treatment. When there is a budget constraint limiting the total resources that we can spend on the treatment, determining which individuals to prioritize requires learning the relation of benefits as well as the costs of the treatment to the covariates.

\begin{exam}
\label{exam:marketing}
\textbf{Marketing incentives.}
Suppose a gym wants to evaluate a campaign that gives a ``first month free'' offer to some potential customers, with the goal of enrolling more long-term members.
Clearly, the treatment effect may vary across customers, as may the cost. Some recipients of the offer may visit the gym just a handful of times during their
free month (low cost) and then upgrade to a regular membership (high reward), while others may use the gym's facilities every day during
their free month (high cost) but then fail to convert (low reward). A marketing campaign that allocates resources only based on rewards but not costs may not
spend its budget optimally.
An analysis of a marketing experiment with this structure run by a sharing economy company is available in the working paper version of this paper \citep{sun2021treatmentWP}.
\end{exam}

\begin{exam}
\label{exam:crisis}
\textbf{Targeting scarce healthcare resources in a crisis.}
Consider a hospital that has insufficient intensive care beds to treat all incoming patients,
and needs to choose whom to prioritize given available resources. Suppose, moreover, that the hospital only has
two types of incoming patients. Patients of type A are responsive to treatment, and their chance of survival rises by
10\% if admitted to intensive care; however, their recovery is slow, and they will spend 20 days in the unit if admitted.
In contrast, patients of type B get a 5\% increase in chance of survival if admitted, but will only spend 5 days in the unit
if admitted. Here, targeting based on treatment heterogeneity would prioritize patients of type A, but this is not
the utility-maximizing prioritization rule: If the hospital only targets patients of type A, in the long run it can save
0.5 patients per day per 100 intensive care beds, whereas if it only targeted patients of type B it could double this number to
1 patient per day per 100 intensive care beds.
\end{exam}

\begin{exam}
\label{exam:insurance}
\textbf{Insurance subsidies.}
Suppose a philanthropic organization wants to offer a subsidized insurance product.
The organization has a finite budget, and wants to design its program to maximize
benefits (e.g., in the case of health insurance, to maximize the total improvement
along a target health metric). In this setting, utility-maximization requires
considering both how much a recipient would benefit from the insurance, and
how many claims they might make (and thus how much of the total budget they
would use up).
\end{exam}

In this paper, we propose an approach to optimal treatment prioritization in a setting where we have
a limited budget, and our treatment of interest has costs that are both variable and uncertain. We
show that the optimal feasible treatment rule ranks units by a cost-aware priority score, formed as
a ratio of conditional expected incremental benefits to conditional expected incremental costs, and then treats people ordered by this priority score until budget runs out (or the intervention
is no longer beneficial). We argue that the fact that our approach is priority-based, i.e., that it
first ranks units by priority and then allocates them to treatment until the budget
has been spent, has some notable practical advantages: It ensures monotonicity in
treatment assignment (i.e., the set of people treated at a higher budget level is
a superset of people treated at a lower budget level), and enables us to more precisely
enforce the budget constraint when deploying the policy to new data.

The main learning problem in the paper is estimating the optimal priority scores;
our proposed policies then involve targeting using the estimated score.
We start by showing that in a semi-parametric setting---where the priority score is
linear in the pre-treatment covariates---a moment-based estimator of the score function
converges at a $1/\sqrt{n}$-rate and has an asymptotically normal sampling distribution.
In the more general non-parametric setting, we show that the scores can be learned by using
existing algorithms based on generalized random forests \citep*{breiman2001random,athey2019generalized}.
We also provide a method for inference on the benefit of a given estimated priority-based rule under a specific budget. 

We find our approach to perform well in a number of applications,
and to enable meaningful gains relative to approaches that do not
account for variable costs in targeting.

\subsection{Related Work}

The need to account for the costs of an intervention arises in a number of application areas.
The effectiveness of an intervention across studies is often compared on the basis of cost-effectiveness, i.e., the positive effect for a dollar invested.
\citet{hendren2020unified} perform a meta-study in which they compare a large number of experiments with public expenditures on the basis of cost-effectiveness,
and also discuss a common and sensible way to construct the costs and the benefits variables across studies. \citet{dhaliwal2013comparative} do the same focusing
on education. However, while such cost-effectiveness comparisons across interventions or treatments are ubiquitous in the literature, these papers do not generally
consider the heterogeneity in the cost-effectiveness estimates within their study in a systematic way, or the potential for targeted treatments.

Our contribution fits broadly into a growing literature on treatment personalization, including \citet*{bertsimas2019optimal},
\citet*{hahn2020bayesian}, \citet{kallus2018confounding}, \citet{kennedy2020optimal}, \citet{kunzel2019metalearners},
\citet{nie2020quasi}, \citet{wager2018estimation}, \citet{zhao2012estimating}.
Most of this literature has focused on settings where cost of treatment
is constant across units and so doesn't enter into considerations about
optimal targeting; however, there are a handful of recent exceptions,
involving two general approaches to taking costs
into account for treatment personalization.
Both approaches solve the same optimization problem of maximizing outcomes while constraining costs to meet a budget, but use algorithms that are not priority-based.

The first approach, considered by \citet*{hoch2002something} and \citet{xu2020estimating},
is to create a new outcome, called the net monetary benefit, which captures both the cost
and benefit of treatment. Concretely, this approach specifies outcomes of the
form ``$\text{reward} - \nu \times \text{cost}$'' and then runs standard methods
for personalization of these outcomes. This approach is helpful if we are
able to pre-commit to a value of $\nu$ that brings costs and rewards to the
same scale.  The second approach, considered in \citet{huang2020estimating}, \citet{sun2021empirical}
and \citet*{wang2018learning}, is to directly impose cost constraints into
the outcome-weighted learning approach of \citet{zhao2012estimating}.
This approach is conceptually direct and is amenable to extensions, such as multiple treatments, which are not straightforward using a priority-based rule. However, it relies on a non-trivial
optimization problem that can be difficult to solve with many thousands of observations. For both optimization methods, enforcing a specific target budget exactly is not feasible, and considering interventions over a range of possible budgets requires re-fitting the model each time. In contrast,  our approach relies on ranking by priority scores, and the budget only impacts the cutoff above which individuals are treated. This means that the treatment can be rolled out sequentially until the deployment budget is exhausted, and the performance of the rule can be evaluated at multiple budget levels using a single estimate of the priority scores. Furthermore, our approach is amenable to inference on the `lift' of the policy, e.g. the expected gain of the policy compared to a random policy spending the same budget, which is not available for the existing methods. \citet{wang2018learning} also recognize that the solution to the optimization problem takes the form of Theorem \ref{theo:pi_opt} in this work, and propose what they refer to as a regression-model-based
learning algorithm. This estimator is equivalent to the direct ratio approach discussed and used as a baseline method in the simulations in this paper. Our paper, however, goes beyond the results in \citet{wang2018learning} by proposing priority-based estimators that improve upon the direct ratio estimator, and deriving asymptotic guarantees on the performance of the treatment rule. 
Finally, \citet{luedtke2016optimal}  and \citet{bhattacharya2012inferring} discuss the role of budget when allocating treatments; however, they
assume a constant cost of treatment.

\section{Optimal Allocation under Budget Constraints}

Throughout this paper, we formalize causal effects using the potential outcomes framework \citep{imbens2015causal}.
We assume that we observe independent and identically distributed tuples $(X_i, \, W_i, \, Y_i, \, C_i) \simiid P$ for $i = 1, \, \ldots, \, n$,
where $X_i \in \xx$ denotes pre-treatment covariates, $W_i \in \cb{0, \, 1}$ denotes treatment assignment,
$Y_i \in \RR$ denotes the observed outcome, and $C_i \in \RR$ denotes incurred cost. 
Here, both $Y_i$ and $C_i$ depend on the assigned treatment $W_i$, and we capture this relationship
via potential outcomes: We posit pairs $\cb{Y_i(0), \, Y_i(1)}$ and $\cb{C_i(0), \, C_i(1)}$ denoting the outcomes
(and respectively costs) we would have observed for treatment assignments $W_i = 0$ and $W_i = 1$, such that
we in fact observe $Y_i = Y_i(W_i)$ and $C_i = C_i(W_i)$ given the realized treatment $W_i$.  In many applications,
we may know a priori that $C_i(0) = 0$ (i.e., there is no cost to not assigning treatment); for now, however, we also
allow for the general case where $C_i(0)$ may be non-zero. 
Throughout, we assume that treatment increases costs in the following sense:

\begin{assumption}
\label{as:c}
Treatment is costly, in that $C_i(1) \geq C_i(0)$ almost surely
and $\mathbb E[C_i(1) - C_i(0) \,|\, X_i = x] > 0$ for all $x \in \xx$.
\end{assumption}

\label{sec:opt}

The goal is to use the sample of data $(X_i, W_i, Y_i, C_i)$ for $i = 1, \ldots, n$ to estimate the optimal treatment allocation rule. The first step is to define the optimal treatment allocation rule in the population $P$ under a budget constraint and variable costs. A treatment allocation rule (or policy) is a function $\pi: \xx \rightarrow [0, \, 1]$ mapping pre-treatment covariates to an action, where prescriptions $0 < \pi(x) < 1$ are interpreted as random actions
(i.e., we randomly assign treatment with probability $\pi(x)$).
The (incremental) value\footnote{In most cases, expected outcomes are an appropriate measure of value for a policymaker. As acknowledged elsewhere in the causal inference literature, however, in cases with heavy-tailed data, or where the policy-maker is concerned about avoiding harm to certain subgroups, then alternative objectives should be considered \citep{manski1988ordinal}. }   $V$ of a policy $\pi$ is the expected gain it achieves
by treating the units it prescribes treatment to, $V(\pi) = \EE{\pi(X_i) \p{Y_i(1) - Y_i(0)}}$, while the (incremental)
cost $G$ of $\pi$ is $G(\pi) = \EE{\pi(X_i) \p{C_i(1) - C_i(0)}}$.
Given a budget constraint $B$,\footnote{We choose to impose the budget constraint on the incremental cost of the policy over the baseline cost of treating nobody, $\mathbb E[C_i(0)]$. In cases where a policymaker prefers a budget on total cost, then they may prefer to replace $B$ with $\bar B = B + \mathbb E[C_i(0)]$.} the optimal policy $\pi^*_B$ solves the following knapsack-type problem
\begin{equation}
\label{eq:pi_opt_orig}
\pi^*_B := \arg\max \left\{V(\pi) : G(\pi) \leq B \right\}.
\end{equation}
Recall that the knapsack problem involves selecting a set of items such as to maximize the aggregate ``value''
of the selected items subject to a constraint on the allowable ``weight''; and, in our setting, the treatment
effect $Y_i(1) - Y_i(0)$ is the value we want to maximize while the incremental cost $C_i(1) - C_i(0)$ acts as a weight.
There is a key difference between our treatment allocation problem and the traditional knapsack problem. We do not know the distribution of the outcomes or costs, and need to learn them from data. Here, we
momentarily abstract away from the learning problem and first write down the form of the optimal treatment
assignment rule given the true data generating distribution; then, we will turn towards learning in the following sections.

In this setting, the form of the optimal treatment allocation rule \eqref{eq:pi_opt_orig} follows directly from the
well known solution to the fractional knapsack problem \citep{dantzig1957discrete}. The optimal policy involves
first computing the following conditional cost-benefit ratio
function,\footnote{In the medical literature, this quantity is also known
as the incremental cost-effectiveness ratio \citep*{hoch2002something}. We use
the convention that $a/0$ is equal to $+\infty$ if $a > 0$, $-\infty$ if $a < 0$, and $0$ if $a = 0$.}
\begin{equation}
\label{eq:ratio}
\rho(x) := \frac{\EE{Y_i(1) - Y_i(0) \cond X_i = x}}{\EE{C_i(1) - C_i(0) \cond X_i = x}},
\end{equation}
and then prioritizing treatment in decreasing order of $\rho(x)$.
The following result formalizes this statement.
The proof of Theorem \ref{theo:pi_opt} given in the appendix generalizes
an argument from \citet*{luedtke2016optimal} to the setting with variable
costs.  \citet{wang2018learning} prove a version of Theorem \ref{theo:pi_opt} when  $\rho(X_i)$ has a continuous distribution, so a deterministic policy is optimal, and acknowledge the possibility of a randomized rule in the more general setting handled explicitly here. 
\begin{defi}
  A policy $\pi_B$ is a (stochastic) threshold policy based on the score $s: \mathcal{X} \to \mathbb{R}$ if there exists a threshold $\rho_B$ and  $a_B \in [0, 1]$ such that
  \begin{equation} 
\label{eq:pi_opt}
\pi_B(x) =   \{ 
0 \text{ if }  \ s(x) < \rho_B,
\ a_B  \text{ if }  \ s(x) = \rho_B, 
\ 1 \text{ if } \ s(x) > \rho_B.
\}
\end{equation}
\end{defi}

\begin{theo}
\label{theo:pi_opt}
Under Assumption \ref{as:c}, the optimal (stochastic) policy $\pi^*_B$  is a threshold policy based on the score $\rho(x)$ from \eqref{eq:ratio} with threshold $\rho_B$ and randomization parameter $a_B$. Additionally, either $\rho_B = a_B = 0$ (i.e., we have sufficient budget to treat everyone with a positive treatment effect), or $\rho_B > 0$ and this policy has cost exactly $B$ in expectation. In the case where $\rho(X_i)$ has a bounded density, $\mathbb{P}[\rho(X_i) = \rho_B] = 0$, the policy $\pi^*_B$ is both deterministic and the unique optimal policy.
\end{theo}

\begin{rema}
We emphasize that $\pi^*_B$ involves ranking units by the ratio of conditional expectations $\rho(x)$,
rather than by the actual cost-benefit ratios $R_i = (Y_i(1) - Y_i(0)) \,\big/\, (C_i(1) - C_i(0))$, as one might
expect in a classical deterministic knapsack specification where the value and cost of each unit is known. In our setting, any candidate policy gives the same treatment probability to all units with a given value of observables $X_i$. This means that the value and cost of treating units with a given value of $X_i$ is respectively $\mathbb E[Y_i(1) - Y_i(0) | X_i ]$ and $\mathbb E[C_i(1) - C_i(0) | X_i]$. This means that the bang-for-buck measure that solves the knapsack problem is the ratio of conditional expectations, rather than $R_i$, which is not identifiable, or $\mathbb E[R_i | X_i]$, which is not the correct aggregation over unit-level heterogeneity. \end{rema}

\subsection{Identifying the Priority Score in Randomized Trials}
\label{sec:identification}

To make use of Theorem \ref{theo:pi_opt} in practice, we
need to make assumptions that let us identify the target $\rho(x)$ from observable
data. The difficulty here is that $\rho(x)$ depends on all four potential outcomes $Y_i(0)$, $Y_i(1)$, $C_i(0)$
and $C_i(1)$, whereas we only get to observe the realized outcomes $Y_i = Y_i(W_i)$ and $C_i = C_i(W_i)$.
Such difficulties are recurrent in the literature on treatment effect estimation, and arise from what
\citet{holland1986statistics} calls the fundamental problem of causal inference.

Here, we address this difficulty by assuming that we have access to data from a randomized controlled trial, i.e.,
where $W_i$ is determined by an exogenous random process; or, more generally, that we have data where the treatment
assignment mechanism is unconfounded in the sense of \citet{rosenbaum1983central}, i.e., that it is
as good as random once we condition on pre-treatment covariates $X_i$.
Randomized controlled trials are frequently used to guide treatment allocation decisions in application
areas where costs may matter \citep[see, e.g.,][]{banerjee2011poor,gupta2020maximizing,kohavi2009controlled},
and unconfoundedness assumptions are widely used in the literature on treatment personalization
\citep{kunzel2019metalearners,wager2018estimation}.

The following result shows how, under unconfoundedness, we can re-write $\rho(x)$ in terms of observable moments.
Given this result, the problem of estimating $\rho(x)$ now reduces to a pure statistical problem of
estimating a ratio of conditional covariances.

\begin{assumption}
\label{assu:unconf}
The treatment assignment mechanism is unconfounded, \\
$\sqb{\cb{Y_i(0), \, Y_i(1), \, C_i(0), \, C_i(1)} \indep W_i} \cond X_i,$
and satisfies overlap, $0 < \PP{W_i = 1 \cond X_i = x} < 1$.
\end{assumption}

\begin{prop}
\label{prop:unconf}
In the setting of Theorem \ref{theo:pi_opt}, if Assumption \ref{assu:unconf} holds then $
\rho(x) = \frac{\operatorname{Cov}[Y_i, \, W_i \mid X_i = x]}{\operatorname{Cov}[C_i, \, W_i \mid X_i = x]}$.
\end{prop}

At first glance, the problem of estimating a ratio of covariances as in Proposition \ref{prop:unconf} may seem like an
explicit but potentially difficult statistical problem. However, there is a useful connection between the
statistical task of estimating this ratio, and the task of estimating a (conditional) local
average treatment effect using an instrumental variable \citep*{angrist1996identification,durbin1954errors}.
Specifically, suppose we have independent and identically distributed samples $(X_i, \, Y_i, \, T_i, \, Z_i)$
where the $X_i$ are pre-treatment covariates, $T_i$ is a (potentially endogenous) treatment, $Y_i$ is an outcome, and $Z_i$ is an
(exogenous) instrument. In this setting and under further assumptions discussed in \citet*{imbens1994identification},
the (conditional) local average treatment effect, $\lambda(x) = \Cov{Y_i, \, Z_i \cond X_i = x}/\Cov{T_i, \, Z_i \cond X_i = x}$
is a natural measure of the causal effect of the endogenous treatment $T_i$ on the outcome $Y_i$.
Several authors, including \citet{abadie2003semiparametric}, \citet{angrist2008mostly},
\citet{chernozhukov2018double}, \citet*{athey2019generalized} and \citet*{wang2018instrumental},
have then used this instrumental variables setting as motivation for developing methods that boil down
to estimating a ratio of conditional covariances  as in $\lambda(x)$.

The upshot is that, although our problem and that of treatment effect estimation with instruments
are conceptually very different, they both reduce to statistically equivalent ratio estimation problems:
Despite divergent derivations and motivations, there is no difference between the statistical targets
$\rho(x)$ in Proposition \ref{prop:unconf} and $\lambda(x)$. Thus, we can take any method for estimating $\lambda(x)$, and turn it into an estimator for $\rho(x)$ by simply
plugging in our treatment $W_i$ where the method expects an ``instrument'' $Z_i$, and plugging in
our cost $C_i$ where it expects a ``treatment'' $T_i$.

\section{Learning Treatment Allocation Rules}
\label{sec:learning}

The simple characterization of the optimal treatment rule $\pi^*_B$ given in Theorem \ref{theo:pi_opt} suggests
the following simple algorithm for treatment prioritization, in cases where the score has bounded density, so the optimal policy is unique and deterministic: 1) get an estimate $\hrho(x)$ of the ratio \eqref{eq:ratio} on a training set where pre-treatment covariates $X_i$, treatment $W_i$, and realized costs and outcomes $(Y_i, C_i)$ are observed, 2) on new data (e.g. a deployment set), rank units $i$ in descending order of $\hrho(X_i)$, and treat those with estimated ratio above the estimated threshold $\hat \rho_B$: $\hat \pi(X_i) = \mathbbm{1}( \hrho(X_i) > \hat \rho_B)$. In other words, each individual is assigned a priority score, and the estimate of this priority score will not depend on the budget. Individuals are assigned to the treatment in order of their priority, up until a threshold, where the threshold ensures the budget constraint is respected. 

Depending on the setting, we may apply different thresholds. Below we consider two examples of thresholds achieving different objectives. In each, we will apply the same threshold formula to different data samples: 
\begin{equation}
  \label{eq:threshold}
  \hat \rho_B= \min \left \{ p \in  [0, \infty) :  \frac{1}{n_\text{sample}} \sum_{i=1}^{n_\text{sample}} \gamma_i C_i \mathbbm{1} \{  \hrho(X_i) \geq p \} \leq B \right \},
\end{equation}
where $\gamma_i$ are weights that, for example, allow us to estimate the budget spent for threshold $p$ using the holdout dataset: $\gamma_i = \frac{W_i}{\mathbb{E}[W_i \mid X_i]} - \frac{(1- W_i)}{1 - \mathbb{E}[W_i \mid X_i]}$.

In some settings we can apply this formula to compute the threshold that (nearly) exactly satisfies the budget in a deployment set. Consider one important setting when this is possible: the control arm has no cost (i.e., $C_i(0) = 0$), we
have an upper bound on the treatment costs, $C_i(1) \leq M$, and the treatment
cost is immediately (or rapidly) revealed for units if they're assigned to
treatment. The priority score creates treatment assignment rules, which are monotonic in the budget. For $B' \geq B$, and any deployment set of individuals, any individual that is treated under budget $B$ is also treated under $B'$.  This allows a budget for a campaign to be increased after the campaign has already started. In this case, we can satisfy the budget to within tolerance $M/n_\text{deployment}$
by: treating units in descending order of $\hrho(X_i)$ and keeping track of
the accumulated costs from treated individuals; and then stopping
when the accumulated cost of treatment is within $M/n_\text{deployment}$ of $B$. This procedure is analogous to using a threshold given by \eqref{eq:threshold} applied to the deployment set, where $\gamma_i$ is set to 1.

In settings where there is a significant delay in observing realized costs after treatment this algorithm is not feasible and controlling the realized costs on the deployment set will in general not be possible, so it is necessary to use a separate holdout set to compute the threshold. Then, the budget is met in expectation asymptotically, but there may be finite-sample violations of the constraint. The approach in \citet{sun2021empirical} can be applied to adjust the threshold so that the budget constraint is satisfied with high probability in finite samples. 

The following result guarantees that the priority-based approach will translate accurate estimates of $\hrho(\cdot)$ into low-regret treatment-assignment rules.
This result accommodates both settings where our learned policy $\hat{\pi}_B(\cdot)$ exactly respects its budget constraint ($\delta = 0$) or over/under-spends its budget
($\delta > 0$ or $\delta < 0$ respectively).

\begin{theo}
\label{theo:rate}
Assume that the optimal policy is unique and deterministic. Under the setting of Theorem \ref{theo:pi_opt}, suppose that $\hrho(x)$ is a score function. Then, if we use a threshold policy $\hat{\pi}_B(x) = \mathbbm{1}[\hrho(x) > \hrho_B]$, that spends approximately the same budget as the optimal policy with a possible budget violation $\delta = \mathbb{E}_T[(\pi^*_B(X_i) - \hat{\pi}_B(X_i))\tau_C(X_i)]$, the resulting regret will be bounded as
$V(\pi^*_B) - V(\hat{\pi}_B) \leq \mathbb{E}_T[\tau_C(X_i)|\rho(X_i) - \hrho(X_i)|] + \hrho_B \delta$, where $\tau_C(X_i) = \mathbb{E}[C_i(1) - C_i(0) \mid X_i]$ and $\mathbb{E}_T[\cdot]$ is an expectation over an independent draw of the data, holding the training data that is used to estimate $\hrho(\cdot)$ fixed. 
\end{theo}

\begin{rema}
Even if we exactly satisfy the budget in the deployment set, the term $\hrho_B \delta$ is defined in terms of the population distribution and will be non-zero. When it is optimal to spend the whole budget and $\rho_B > 0$, then both the threshold based on the holdout set and the threshold based on the deployment set will give $\hrho_B \delta = O_p(\delta) = O_p(1/\sqrt{n_\text{sample}})$, where $\text{sample} \in \{ \text{holdout}, \text{deployment} \}. $  
\end{rema}

Now, to make use of this framework, it remains to develop estimators for $\rho(x)$.
First, in Section \ref{sec:parametric}, we consider a semi-parametric
specification where $\rho(x)$ is assumed to be linear in $x$,
but the conditional covariances $\Cov{Y_i, \, W_i \cond X_i = x}$ and
$\Cov{C_i, \, W_i \cond X_i = x}$ themselves may have a complex dependence on $x$. While linearity is a restrictive assumption, it leads to a simple algorithm with good performance in simulations, even when scores are non-linear, and there are some practical examples where an unknown scaling function enters both conditional outcomes and costs, but $\rho(x)$ is linear. The results in Section \ref{sec:parametric} can also be extended to more complex  parametric approximations to $\rho(x)$ at a cost of a more complex estimator and proof. 

In the linear setting, we develop a Neyman-orthogonal estimator for $\rho(x)$ that
allows for $1/\sqrt{n}$ rates of convergence. In practice, implementing the estimator requires to follow \ref{algo}, which is an implementation of two-stage least squares regression with data-splitting. 
Second, in Section \ref{sec:nonparametric}, we propose a
non-parametric estimator for $\rho(x)$ based on random forests. The last paragraph of this section explains how to implement this estimator in practice using the R package \texttt{grf}.

\subsection{Parametric Estimation of the Priority Score}
\label{sec:parametric}

To understand the fundamental nature of the problem of estimation of $\rho(x)$, we
start by considering a semiparametric model following \citet{robinson1988root} where
$\rho(x)$ is constrained to be linear, $\rho(x) = x^\prime\beta$, but the rest of the
model is left non-parametric. This representation leads to a method-of-moments type estimator that has the
same form as a just-identified instrumental variables estimator, and gives a transparent lens on
the key drivers of asymptotic accuracy. Although the linearity assumption is a strong assumption in many practical settings, understanding the performance of the estimator in the parametric setting is helpful before turning to the non-parametric setting.

Here, we follow the ``double machine learning'' approach to semiparametric estimation following \citet{chernozhukov2018double}.
We start by defining a score function: $e_i(\beta, h(X_i)) = (W_i - h_w(X_i)) \left[(Y_i - h_y(X_i)) - (C_i - h_c(X_i)) \rho(X_i)\right]$, with $h_w(x) := \EE{W_i \mid X_i = x}$, $h_y(x) := \EE{Y_i \mid X_i = x}$, and $h_c(x) := \EE{C_i \mid X_i = x}$, and
note that the identification result in Proposition \ref{prop:unconf}
is equivalent to the score function being mean-zero at the true value of $\beta$ (the details are in Appendix \ref{appendix}),
\begin{equation}
\label{eq:conditional_mom}
\EE{e_i(\beta, h(x) )| X_i = x} = 0 \text{ for all } x \in \xx.
\end{equation}
The terms $h(x)$ are nuisance components, i.e., unknown functions that are not of direct interest, but
are required to form the score functions. However, the score function can be verified to be Neyman orthogonal, i.e., the moment condition \eqref{eq:conditional_mom} is robust to small errors in the nuisance components: For any perturbation function $\delta(x)$,
$\sqb{\frac{d}{ d \varepsilon} \EE{e_i(\beta, \, h(x) + \varepsilon \delta(x)) | X_i = x}}_{\varepsilon = 0} = 0$,  for all $x \in \xx$.
Details are given in the proof of Theorem \ref{thm:dml}.
As argued in \citet{chernozhukov2018double}, this Neyman-orthogonality property is crucial to estimators motivated by \eqref{eq:conditional_mom}
enabling robust inference about $\beta$ using flexibly estimated nuisance functions.

Now, the conditional moment \eqref{eq:conditional_mom} is restricted at each value $x \in \xx$,
and may be difficult to work with in practice if the $X_i$ have continuous support or are high dimensional. However, this condition also implies that, given
$\mathcal{B} = \cb{\beta' : \EE{X_i e_i(\beta', h(X_i) )} = 0}$, we must have $\beta \in \mathcal{B}$, and that if $\mathcal{B}$ is a singleton then it identifies $\beta$. We make use of this fact\footnote{This construction is not the only way to turn \eqref{eq:conditional_mom}
into a practical, unconditional moment restriction. In fact, \citet{chernozhukov2018double} shows that, writing
$\sigma^2(x) = \mathbb E[e_i(\beta, h(X_i))^2 | X_i = x]$ and $R(x) = \mathbb E \left[\frac{\partial}{\partial \beta} e_i(\beta, h(X_i) )| X_i = x\right]$,
then the moment condition $\EE{\sigma^{-2}(X_i) R(X_i) e_i(\beta, h(x))} = 0$ leads to a semi-parametrically efficient estimator of $\beta$, reaching the \citet{chamberlain1992efficiency} efficiency bound. However,
estimating $\sigma^2(x)$ and $R(x)$ leads to additional complexity, and so we rely on the simple linear form here.},
along with the cross-fitting of the nuisance components \citep{schick1986asymptotically} to estimate $\beta$ as in \ref{algo}. 
We show below that this estimator achieves a parametric rate of convergence for $\beta$ provided the nuisance
components $\hat{h}$ converge reasonably fast (but not necessarily at a parametric rate themselves), and the
moment condition is full rank. Our proof follows from general results developed in \citet{chernozhukov2018double}.

\begin{figure*}[t]
\begin{algorithm}{Algorithm 1}
    \label{algo}
    \begin{enumerate}
    \item Randomly split the training data into $K$ equally sized folds $k: \mathbb{N} \to \{1, \, \ldots, \, K\}$.
    \item For each fold $k = 1, \, \ldots, \, K$, produce an estimate of the nuisance components $\hat{h}^{(-k)}(\cdot)$ using
    data in all but the $k$-th folds.\footnote{To do this, one can use the \texttt{regression\_forest} command of the \texttt{grf} package in R, for example. In some cases the researcher might have access to ground truth propensity scores and can use them instead of estimating $\hat h_w(\cdot)$.}
    \item Run a two-stage least squares algorithm, instrumenting a regression of $Y_i - \hat h^{(-k(i))}_y(X_i)$ on $(C_i - \hat h^{(-k(i))}_c(X_i)) X_i$ with $(W_i - \hat h^{(-k(i))}_w(X_i)) X_i$ to output $\hat \beta$
    \end{enumerate}
\end{algorithm}
\end{figure*}

\begin{assumption}
\label{assu:nuisance_rates}
Assume that $\xx \subseteq \mathbb{R}^m$. We use estimators $\hat{h}$ of $h$ for which the following hold. There exists a sequence $\delta_n \rightarrow 0$ and constants $a > 0 , A > 0$ and $q > 4$ such that,
when trained on $n$ IID samples from our generative distribution $P$, we obtain an estimator $\hat{h}$ satisfying,
with probability tending to 1 as $n$ gets large,

\begin{equation}
\begin{split}
& \EE[T]{(\hat h_y(X_i) - h_y(X_i))^2}^{\frac{1}{2}} \leq  \rho_{y, n}, \ \ \ \
\EE[T]{(\hat h_y(X_i) - h_y(X_i))^q}^{\frac{1}{q}} \leq A, \\
& \EE[T]{(\hat h_c(X_i) - h_c(X_i))^2}^{\frac{1}{2}} \leq  \rho_{c,n} , \ \ \ \
\EE[T]{(\hat h_c(X_i) - h_c(X_i))^q}^{\frac{1}{q}} \leq A \\
& \EE[T]{(\hat{h}_w(X_i) - h_w(X_i))^2}^{\frac{1}{2}} \leq \rho_{w, n}, \ \ \ \
\EE[T]{(\hat h_w(X_i) - h_w(X_i))^q}^{\frac{1}{q}} \leq A 
\end{split}
\end{equation}
with $\rho_{w, n} \rho_{c, n} \leq \frac{\delta_n}{n^{1/2}}$, $\rho_{w, n} \rho_{y, n} \leq \frac{\delta_n}{n^{1/2}}$ and also $\rho_{w, n} < \delta_n, \rho_{c, n} < \delta_n, \rho_{y, n} < \delta_n$. 
\end{assumption}

\begin{assumption}
\label{assu:regularity} Outcomes, costs, and covariates are bounded, so  there is a finite constant $A > 0$ such that Assumption \ref{assu:nuisance_rates} holds and $X_i \in [-A, A ]$, $| Y_i | \leq A$ and  $| C_i| \leq A$. Additionally, let ${ V}_i = W_i -  \mathbb E[W_i | X_i]$, ${ D}_i  = ( C_i  - \mathbb E[C_i | X_i])$
and $U_i =  Y_i - \mathbb E[Y_i | X_i] -  D_iX'_i \beta$. Assume there is a constant $a > 0$ such that $\EE{V_i^2U_i^2 |X_i} \geq a$.
\end{assumption}

\begin{theo} \label{thm:dml}
Under the assumptions of Proposition \ref{prop:unconf}, suppose furthermore that Assumption \ref{assu:nuisance_rates} and Assumption \ref{assu:regularity}
hold. Additionally, $\mathbb E[X_i X_i']$ is full rank. Then, our estimator $\hbeta$ described above satisfies $\sqrt n \p{\hat { \beta} - \beta} \Rightarrow_d N\p{0, \bm V_{\beta}}$, $\bm V_{\beta} = \mathbb E[ V_i  D_i X_iX_i']^{-1} \mathbb E[U^2_i V_i^2 X_i X_i'] \mathbb E[V_i  D_iX_iX_i']^{-1}$.
\end{theo}

The key property of Theorem \ref{thm:dml} is that we get $1/\sqrt{n}$-rate convergence for $\hbeta$ even if the
rest of the problem is not parametrically specified. In particular, the numerator and denominator used to define
$\rho(x)$ in Proposition \ref{prop:unconf}, i.e., $\Cov{Y_i, \, W_i \cond X_i = x}$ and $\Cov{C_i, \, W_i \cond X_i = x}$,
need not admit a linear specification. Rather, it's enough to be able to estimate relevant nuisance components
at slower rates, e.g. $\hat {h}(X_i) - h(X_i) = o_p(n^{-1/4})$, and this can be done via flexible machine
learning methods.

As discussed in Theorem \ref{theo:rate}, the asymptotic distribution of the estimator $\hbeta$ can be used to bound the regret of the threshold policy based on the estimated score. Making use of Theorem \ref{thm:dml} we have 
$\sqrt{n}(\hat \rho(x) - \rho(x)) \Rightarrow_d N(0, x^{\prime} \bm V_\beta x)$, so $\sqrt{\mathbb{E}_T[(\hat \rho(X_i) - \rho(X_i))^2]} = O_p(n^{-1/2})$. Since costs are bounded, this (by Jensen's inequality) implies that, assuming we are in a setting, where we can satisfy the budget exactly, the regret will also converge at a $n^{-1/2}$ rate.

\subsection{Non-Parametric Estimation of the Priority Score}
\label{sec:nonparametric}

If we're willing to assume that $\rho(x)$ admits a linear form, then the estimator discussed above achieves
good large-sample performance. However, in many applications, we may not be willing to assume a linear
specification $\rho(x) = x'\beta$, and instead seek a non-parametric estimator for $\rho(x)$. In this case,
one possible approach would be to first separately estimate the numerator and denominator in Proposition \ref{prop:unconf},
$\Cov{Y_i, \, W_i \cond X_i = x}$ and $\Cov{C_i, \, W_i \cond X_i = x}$, and then form
$\hrho(x) = \hCov{Y_i, \, W_i \cond X_i = x} \, \big/ \, \hCov{C_i, \, W_i \cond X_i = x}$.
This approach, however, is potentially suboptimal: If the numerator and denominator are more complex than
$\rho(x)$, then the rates of convergence we could achieve via this approach would be slower than ones
we could get via directly targeting $\rho(x)$ \citep{foster2019orthogonal,nie2020quasi}.

Here, we consider one particular solution to direct estimation of $\rho(x)$ based on the ``generalized
random forest'' framework of \citet*{athey2019generalized}. Generalized random forests provide an approach
to turn any conditional moment restriction for a target parameter, such as \eqref{eq:conditional_mom}, into an estimator
for the target parameter that adapts the popular random forest method of \citet{breiman2001random}.
The key idea of the algorithm is that it grows a forest specifically designed to express heterogeneity
in $\rho(x)$, and can thus be more responsive to the actual complexity of this function than methods
that estimate $\Cov{Y_i, \, W_i \cond X_i = x}$ and $\Cov{C_i, \, W_i \cond X_i = x}$ separately
and then take the ratio of these two estimates.

Like random forests, the approach starts by growing a set of $B$ decision trees by recursive partitioning
on the covariates $X_i$ on subsamples of size $s = n^\beta$, where $0<\underline{\beta} < \beta < 1$\footnote{$0 < \underline{\beta}$ depends on the details of the partitioning algorithm, which are discussed by \cite{athey2019generalized}.}. For each tree indexed $b = 1, \, \ldots, \ B$ and a given test point $x$, let $L_b(x)$ be a set of data points falling within the same leaf as $x$ in a tree $b$. Let us define
forest weights $\alpha_i(x) = \frac{1}{B} \sum_{b = 1}^B \frac{1\p{\cb{i \in L_b(x)}}}{\sum_{j = 1}^n 1\p{\cb{j \in L_b(x)}}}$.
Conceptually, the weights $\alpha_i(x)$ capture the relevance of each observation $i = 1, \, \ldots, \, n$
for estimation at $x$; formally, we note that the usual regression forest prediction at $x$ can
be expressed as a weighted average of outcomes $Y_i$ with weights $\alpha_i(x)$.
In our setting, generalized random forests estimate\footnote{In some settings, the budget may require distinguishing between subgroups with heterogeneous treatment effects on outcomes that each have a small treatment effect on costs. In these cases, the estimate of $\hat \rho(x)$ may be unstable due to the small denominator, and a practitioner may benefit from regularizing the denominator of the estimator, for example by adding a small, positive constant to the observed cost for all treated observations.} $\rho(x)$ by solving an empirical version of
\eqref{eq:conditional_mom} with the forest weights $\alpha_i(x)$:
\begin{equation}
\label{eq:ivf}
\begin{split}
&\hrho(x) = \frac{\sum_{i = 1}^n \alpha_i(x) \p{Y_i - \bar h_y(X_i)}\p{W_i - \bar h_w(X_i)}}{\sum_{i = 1}^n \alpha_i(x) \p{C_i - \bar h_c(X_i)}\p{W_i - \bar h_w(X_i)}},
\end{split}
\end{equation}
where  $\bar h_y(x) = \sum_{i = 1}^n \alpha_i(x) Y_i$, $\bar h_c(x) = \sum_{i = 1}^n \alpha_i(x) C_i$, $\bar h_w(x) = \sum_{i = 1}^n \alpha_i(x) W_i$.

As discussed in \citet*{athey2019generalized}, it is helpful to compare \eqref{eq:ivf}
to a simpler $k$-nearest neighbors estimator that first discards all but the $k$ closest observations
to $x$ in covariate space, and then estimates $\rho(x)$ by solving an unconditional version of 
\eqref{eq:conditional_mom} on those $k$ observations. From the perspective of this comparison,
the advantage of generalized random forests is that the weights $\alpha_i(x)$ provide a well tuned,
data-adaptive notion of neighbors relevant to estimating $\rho(x)$.

We refer to \citet*{athey2019generalized} and \citet{athey2019estimating} for details,
including a discussion of how the recursive partitioning used to grow the individual trees
in the forest is run. At a high level, the trees are grown to greedily express as much heterogeneity
as possible in $\rho(x)$. These works also detail how subsampling and subsample splitting are used
to stabilize the estimator, which is important for the formal result given in \citet*{athey2019generalized}. This formal result applies directly to our setting, and ensures large-sample consistency of the learned $\hrho(x)$ under the conditions of Proposition \ref{prop:unconf} and an additional assumption of Lipschitz continuity, which replaces the linearity assumption:

\begin{assumption}[Lipschitz continuity]
  \label{as:lipschitz}
  $\EE{W_i \mid X_i = x}$, $\EE{Y_i \mid X_i = x}$, $\EE{C_i \mid X_i = x}$, $\EE{Y_i W_i\mid X_i = x}$, $\EE{C_i W_i\mid X_i = x}$ are Lipschitz continuous in $x$.
\end{assumption}

A verification of the assumptions in \citet*{athey2019generalized} and the application of their Theorem 5 gives the following result on the asymptotic distribution of $\hrho(x)$:

\begin{theo}
  \label{thm:grf}
   Under the assumptions of Proposition \ref{prop:unconf}, suppose furthermore that Assumptions \ref{assu:regularity} and \ref{as:lipschitz} hold. Then, there is a sequence $\sigma_n(x)$, for which $\frac{\hrho(x) - \rho(x)}{\sigma_n(x)} \Rightarrow N(0, 1)$, and $\sigma_n^2(x) = \operatorname{polylog}(n/s)^{-1} \frac{s}{n}$,
where $\operatorname{polylog}(n/s)$ is a function that is bounded away from 0 and increases at most polynomially with the log-inverse sampling ratio $\log(n/s)$.
\end{theo}

Finally, from a practical perspective, we can again make use of the formal connection to instrumental variables
estimation here. Although the specification above would be enough to build a generalized random forest
for estimating $\rho(x)$, doing so would seem to require a non-trivial amount of implementation work.
However, it turns out that the calculations required to estimate $\rho(x)$ are exactly the same as are
already performed in the ``instrumental forest'' method provided in the \texttt{grf} package of
\citet*{athey2019generalized}, and so we can re-purpose this function for our use case. Specifically,
 we use instrumental forests to estimate $\rho(x)$ by replacing the method's inputs $Z_i$ and
$T_i$ with $W_i$ and $C_i$ respectively (we pass covariates $X_i$ and the outcome $Y_i$ to the instrumental
forest as usual).

\section{Evaluating the Performance of a Targeting Rule}
\label{sec:costcurve}

In deciding whether or not to implement a given estimated treatment prioritization rule, it is useful to characterize for a fixed budget how much the population is expected to benefit in expectation from prioritization compared to a uniform rule, which assigns treatment randomly. In this section, we show how to evaluate the performance of a fixed targeting rule, $\hat \rho(\cdot)$, which is estimated on a training dataset, when the performance is measured on a holdout set. As in \cite{yadlowsky2021evaluating}, this does not account for randomness in score estimation when deriving the distribution of the policy value; this choice corresponds to the uncertainty faced by a policymaker that estimates a priority rule once, and uses that rule on new data. 

  For this section, to keep notation and exposition simple, we assume that the score has a continuous distribution with bounded density, so that there is a unique and deterministic solution to Theorem \ref{theo:pi_opt}. Under this assumption, we can define the expected per-person (incremental) value and budget of a given treatment rule directly in terms of a threshold $s$. $V_{\hat \rho}(s) = \mathbb E_T[ (Y_i(1) - Y_i(0)) \mathbbm{1}(S_i \geq s) ]$ and $G_{\hat \rho}(s) = \mathbb E_T[(C_i(1) - C_i(0)) \mathbbm{1}(S_i \geq s) ]$, where $S_i = \hat \rho(X_i)$, and $\mathbb E_{T}[\cdot ]$ indicates that the expectation is taken over an independent draw of the data, and is conditional on the estimated priority score $\hat \rho(\cdot)$. 

Let $b \cdot n$ define a budget constraint for a sample of $n$ individuals. The expected gain from spending a budget of $b \cdot n$ under a uniform rule is $ b \cdot \tau_y/  \tau_c$, where $b/ \tau_c$ is the fixed treatment probability induced by a budget of $b$ under a uniform rule, and $\tau_y = \mathbb E[Y_i(1) - Y_i(0)]$ and $\tau_c = \mathbb E[C_i(1) - C_i(0) ]$ are ATEs.  The lift at a given budget $b$ is the difference in the expected gain in outcomes from a fixed priority rule compared to a uniform randomized rule that spends the same budget. Computing the lift is useful for evaluating whether there is sufficient heterogeneity in treatment response conditional on $X_i$; it assists a policymaker in deciding whether to implement a more complex priority-based treatment rule instead of a simple lottery allocation policy. We can formally define the lift of a priority rule for a given budget $b$ as $\Delta_{\hat \rho}(b) = Q_{\hat \rho}(b) -  b \cdot \tau_y/  \tau_c$,
where $Q_{\hat \rho} (b) = V_{\hat \rho}( G_{\hat \rho}^{-1}(b))$ is the expected value of the fixed priority rule at budget $b$, and the inverse of $G_{\hat \rho}(s)$ exists by the assumptions in Theorem \ref{thm:Qexp}.
 
We next describe how to perform estimation and inference on the value of the prioritization rule and its lift over a uniform rule at a single budget value, and how to construct a QINI curve using these estimated values. The QINI curve is a popular visualization that, for a family of thresholded scoring rules, plots the cost of treatment on
the $x$-axis and the benefit of treatment on the $y$-axis \citep{ascarza2018retention,imai2019experimental,rzepakowski2012decision,yadlowsky2021evaluating}.

Existing results on estimating QINI curves, however, assume that the cost of treating each unit is the same, and so the cost of treatment on
the $x$-axis is equivalent to the number of units treated; however, in our setting, this equivalence no longer holds.

To address this challenge, we propose the following estimator for the QINI curve in a setting with uncertain costs. We first form
inverse-propensity weighted estimators of $V_{\hat \rho}(s)$ and $G_{\hat \rho}(s)$ with a holdout sample of size $n_{holdout}$  as follows:  $\hat V_{\hat \rho}(s) = \frac{1}{n_{holdout}} \sum \limits_{i=1}^{n_{holdout}} \left (\frac{W_i}{h_w(X_i)} - \frac{(1- W_i)}{1 - h_w(X_i)} \right ) Y_i \mathbbm{1} \{  \hrho(X_i) \geq s \}, \hat G_{\hat \rho}(s) = \frac{1}{n_{holdout}} \sum \limits_{i=1}^{n_{holdout}} \left (\frac{W_i}{h_w(X_i)} - \frac{(1- W_i)}{1 - h_w(X_i)} \right ) C_i \mathbbm{1} \{  \hrho(X_i) \geq s \}$,
where $h_w(X_i) = \PP{W_i = 1 \cond X_i = x}$ is the treatment probability for units with $X_i = x$ (in a uniformly randomized trial, $h_w(x) = q$ would be constant);
these are unbiased for $V_{\hat \rho}(s)$ and $G_{\hat \rho}(s)$ by the randomization of $W_i$ \citep{imbens2015causal}. In this section, we assume that the propensity score is known, but it is possible to follow techniques in the existing literature to extend Theorem \ref{thm:Qexp} to settings where the propensity score is estimated; see, for example,  \citet{hirano2003efficient}, \citet{wooldridge2007inverse} and \citet{graham2012inverse}. 

We plot the curve $(\hat V_{\hat \rho}(S_{i_k}), \, \hat G_{\hat \rho}(S_{i_k}))$ for $k = 1, \, \ldots, \, n_{holdout}$, where $S_{i_1} \geq \ldots \geq S_{i_{n_{holdout}}}$ are the ordered
scores on the holdout set, where $S_i = \hat \rho(X_i)$. Figures \ref{fig:simu} and \ref{fig:application2} illustrate this approach in applications.
The point at which this curve intersects the vertical line at $x = b$ corresponds to an estimate of the lift that can be achieved with budget $b$.

We can also construct estimators for $\Delta_{\hat \rho}(b)$ and  $Q_{\hat \rho}( b)$. Let $\hat s( b) \in \hat G^{-1}(b)$ and $\hat \tau_y$ and $\hat \tau_c$ be any consistent estimators of the average treatment effects on outcomes and costs. Then, $\hat Q_{\hat \rho}(b) = \hat V_{\hat \rho}(\hat s(b))$, and $\hat \Delta_{\hat \rho}( b) =  \hat Q_{\hat \rho}(b) -  b \frac{\hat \tau_y}{\hat \tau_c}$. 

In order to derive an inference strategy, our first result is that we can write $\hat Q_{\hat \rho}(b) - Q_{\hat \rho}(b)$ in an asymptotically linear form.

\begin{theo} \label{thm:Qexp}
Under Assumption \ref{as:c} and \ref{assu:unconf}, if we have a scoring rule $S : \xx \rightarrow \mathcal S$ such that $\mathcal S$ is compact,
$V_{\hat \rho}(s)$ and $G_{\hat \rho}(s)$ are continuously differentiable in $s$, the score distribution has strictly positive and bounded density\footnote{We make this assumption for convenience and to keep notation simple rather than being essential. However, we can extend to the setting where the score has mass points by introducing an auxiliary score variable that combines the original score and a uniformly distributed random variable at the mass point, and working with the auxiliary variable instead.} $f(s)$ for all $s \in \mathcal S$, and
there is an approximate inverse in finite samples, i.e., $ \hat G_{\hat \rho}(\hat s(b))  - b = o_p(n^{-0.5})$,
then $\hat Q_{\hat \rho}(b) $ and $\hat \Delta_{\hat \rho}(b)$  have asymptotically linear representations:
\begin{align*}
& \sqrt n \p{\hat Q_{\hat \rho}(b)  - Q_{\hat \rho} (b)} = \frac{1}{\sqrt n} \sum \limits_{i=1}^n \psi^q_i + o_p(1) \Rightarrow_d N(0, \sigma^2_q),  \\
& \sqrt n \p{ \hat \Delta_{\hat \rho} (b) - \Delta_{\hat \rho}(b)} = \frac{1}{\sqrt n} \sum \limits_{i=1}^n \left(\psi^q_i  - \frac{b}{\tau_c} \psi^y_i + b \frac{\tau_y}{\tau_c^2} \psi^c_i\right) + o_p(1) \Rightarrow_d  N(0, \sigma^2_d).
\end{align*}
where $\psi^q_i = V_i(s(b)) - V_{\hat \rho}(s(b)) - \frac{V'_{\hat \rho}(s(b))}{G'_{\hat \rho}(s(b))} (G_i(s(b)) - G_{\hat \rho}(s(b)))$, $\psi^y_i =  \left( \frac{W_i}{h_w(X_i)} - \frac{1- W_i}{ 1- h_w(X_i)} \right) Y_i - \tau_y$, $\psi^c_i = \left( \frac{W_i}{h_w(X_i)} - \frac{1- W_i}{ 1- h_w(X_i)} \right) C_i - \tau_c$. We also have $V_i(s) =  \left( \frac{W_i}{h_w(X_i)} - \frac{1- W_i}{ 1- h_w(X_i)} \right) Y_i \mathbbm {1}(S_i \geq s)$ and 
$G_i(s) =  \left( \frac{W_i}{h_w(X_i)} - \frac{1- W_i}{ 1- h_w(X_i)} \right) C_i \mathbbm {1}(S_i \geq s)$.
\end{theo}

The asymptotic linear representation in Theorem \ref{thm:Qexp} implies asymptotic normality of the estimators. It also implies that various resampling-based
estimators \citep{efron1982jackknife} yield
valid confidence intervals for $\Delta_{\hat \rho} (b)$ \citep{chung2013exact,yadlowsky2021evaluating}.
In particular, Lemma 12 of \citet{yadlowsky2021evaluating}
implies that the half-sample bootstrap will yield valid inference in this setting. We use this
result to justify confidence intervals in our applications below.
We emphasize that these confidence statements are conditional on the training set, i.e., we take the prioritization
 rules learned on the training set as given, and only quantify holdout set uncertainty in estimating the QINI curve. The continuous differentiability of $V_{\hat \rho}(s)$ and $G_{\hat \rho}(s)$ required for this result is satisfied in settings where the score $S_i$ has a continuously differentiable distribution function and both $\mathbb E[Y_i(1) - Y_i(0) | S_i = s]$ and $\mathbb E[C_i(1) - C_i(0) | S_i =s]$ are continuous functions in $s$.

Given a method for estimating the QINI curve in the setting with uncertain costs, we can also estimate the area under the QINI curve, known as the QINI coefficient. The QINI coefficient provides a single metric by which we can judge the performance of an allocation rule in a budget-independent way. The QINI coefficient is the area between the estimated reward of the treatment allocation rule and the random treatment rule with the same cost, as the average budget ranges from $0$ to the average cost of treating everyone in the sample,
$ QINI = \int_{0}^{\tau_c} \Delta_{\hat \rho}(b) db $.

The natural plug-in estimator for this quantity is
\smash{$ \widehat{QINI} = \int_{0}^{\hat \tau_c} \hat \Delta_{\hat \rho}(b)  db$},
where \smash{$\hat \Delta_{\hat \rho}(b)$} is as given above. We believe it plausible
that the result from Theorem \ref{thm:Qexp} can also be extended to provide
a central limit theorem for the QINI coefficient (see also the discussions in
\citet{yadlowsky2021evaluating}); however, we leave this question to further work.

\section{Simulation Study}
\label{sec:simulation}

In order to understand numerical aspects of treatment allocation with uncertain costs,
we conduct a simulation-based comparison of 6 methods for targeting. We consider 4
priority-based methods and two direct optimization methods, proposed
by \citet{hoch2002something} and \citet{sun2021empirical}, that are not priority based. In all our experiments, there is no cost to withholding treatment (i.e., $C_i(0) = 0$) and
we have data from a randomized trial with $\PP{W_i = 1} = p$.
All methods below will make use of these facts whenever appropriate.

\textbf{Ignore Cost.}
We ignore cost, and simply score observations using an estimate $S_i = \htau_Y(x)$ of the treatment effect
$\tau_Y(x) = \EE{Y_i(1) - Y_i(0) \cond X_i = x}$. We estimate $\htau_Y(x)$ using causal forests as implemented
in the \texttt{R}-package \texttt{grf} \citep*{athey2019generalized}.

\textbf{Direct Ratio.}
Our second baseline builds on the characterization result from Theorem \ref{theo:pi_opt}, rather than on the
connection to instrumental variables estimation from Proposition \ref{prop:unconf}. We start by
estimating $\tau_Y(x)$ using causal forests as above, and we also estimate the conditional cost function
$\tau_C(x) = \EE{C_i(1) - C_i(0) \cond X_i = x} = \EE{C_i \cond X_i = x, \, W_i = 1}$ by using a regression
forest from \texttt{grf} to predict $C_i$ from $X_i$ for treated units. Finally, we score
observations using $\hrho(X_i) = \htau_Y(X_i) \,/\,\htau_C(X_i)$.

\textbf{Linear Parametric.} 
For the parametric version of our proposed method, we use \ref{algo} to fit $\hat \beta$, and we use $\hat \beta$ to produce estimated priority scores $\hat \rho(X_i) = X_i' \hat \beta$.

\textbf{Instrumental Forest.} 
For the nonparametric version of our proposed method, as described in Section \ref{sec:nonparametric}, our proposed method gets estimated priority scores $\hrho(X_i)$ from an instrumental forest with ``remapped'' inputs. We use the instrumental forest from the package  \texttt{grf}, except where the function expects an
``instrument'' we provide $W_i$, and where the function expects a ``treatment'' we provide $C_i$.

\textbf{Hoch et al. [2002]}.
 The method predicts a linear combination of the reward and the cost $m(x) = \EE{Y_i(1) - Y_i(0) - \lambda (C_i(1) - C_i(0)) | X_i = x}$ for an appropriate choice of the coefficient $\lambda$ to satisfy the budget constraint. An individual is treated whenever $\hat m(x) > 0$. We use a causal forest from the \texttt{grf} package to estimate $m$. In practice, to meet a specific budget constraint, the $\lambda$ parameter should be chosen by splitting the training dataset, which can add additional noise. For the simulation in Section \ref{sec:simulation}, $\lambda$ is chosen in advance to meet the budget constraint in expectation using a separate sample of data from the data-generating process of the same size as the deployment data.
 
 \textbf{Sun [2021]}. In Table \ref{tab:simu}, we also include a version of \cite{hoch2002something} that applies the approach in \cite{sun2021empirical} for settings where the researcher would like to satisfy the budget constraint in the deployment set with high probability. We first estimate the standard deviation of the budget spent by \cite{hoch2002something} in a deployment set in simulation. Then we choose a larger $\lambda$ to ensure that the budget  in expectation is less than the target constraint minus 1.96 times the standard deviation of the deployment budget.

We first compare the above methods using a simple simulation study that highlights
the behavior of the methods under consideration. For this experiment, we generate covariates
and potential outcomes as follows with $k = 12$ (where left unspecified, variables are generated independently): $X_{ij} \sim \text{Unif}\p{-1, \, 1}$ for $j = 1, \, ..., \, k$, $W_i \sim \text{Bern}\p{p}$, $\varepsilon_i \sim \nn\p{0, \, 1}$,
$Y_i(w) = \max\cb{X_{i1} + X_{i3}, \, 0} + \max\cb{X_{i5} + X_{i6}, \, 0} + w e^{X_{i1} + X_{i2} + X_{i3} + X_{i4}} + \varepsilon_i$,
where $\text{Unif}(a, \, b)$ is a uniform distribution on the interval $[a, \, b]$,
$\nn\p{\mu, \, \sigma^2}$ is a Gaussian distribution with mean $\mu$ and variance $\sigma^2$, and
$\text{Bern}(p)$ stands for the Bernoulli distribution with success probability $p$.
We also consider two settings for the cost $C_i(1)$ of treating a unit: One baseline setting where cost is
random but unpredictable ($C_i(1) \cond X_i \sim \text{Pois}(1)$), and another where cost can be anticipated in terms of covariates ($C_i(1) \cond X_i \sim \text{Pois}\p{e^{X_{i2} + X_{i3} + X_{i4} + X_{i5}}}$), 
where $\text{Pois}(\mu)$ is a Poisson distribution with mean $\mu$. We run both simulations on training sets
of size $n \in \{200, 500, 1000\}$ and with treatment randomization probability $p = 0.5$. We report the $n = 500$ results in the main document and the other variants in the Appendix \ref{appndx:empirical}. The results are qualitatively similar and change the magnitudes of performance gaps between the methods.

\begin{figure}
\centering
\begin{tabular}{cc}
\includegraphics[width=0.45\textwidth]{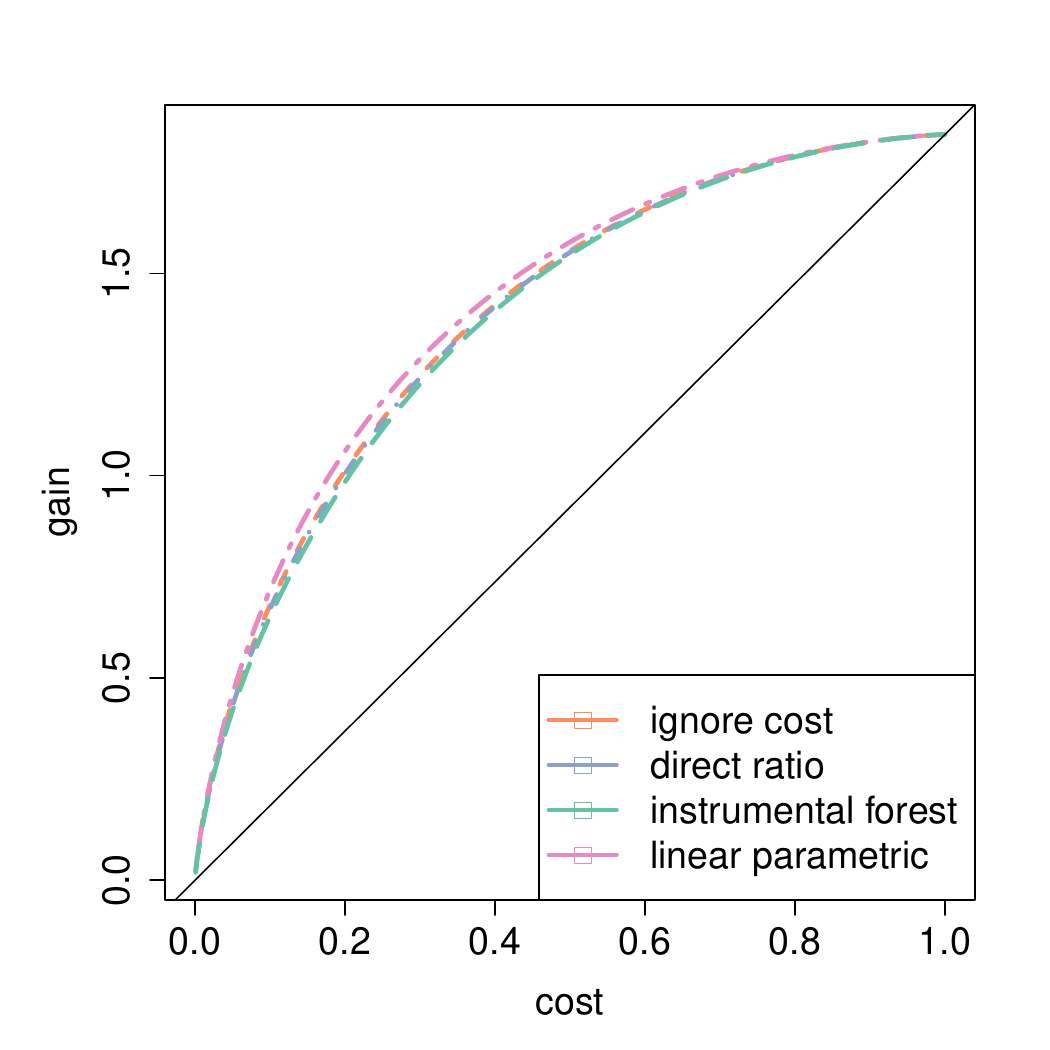} &
\includegraphics[width=0.45\textwidth]{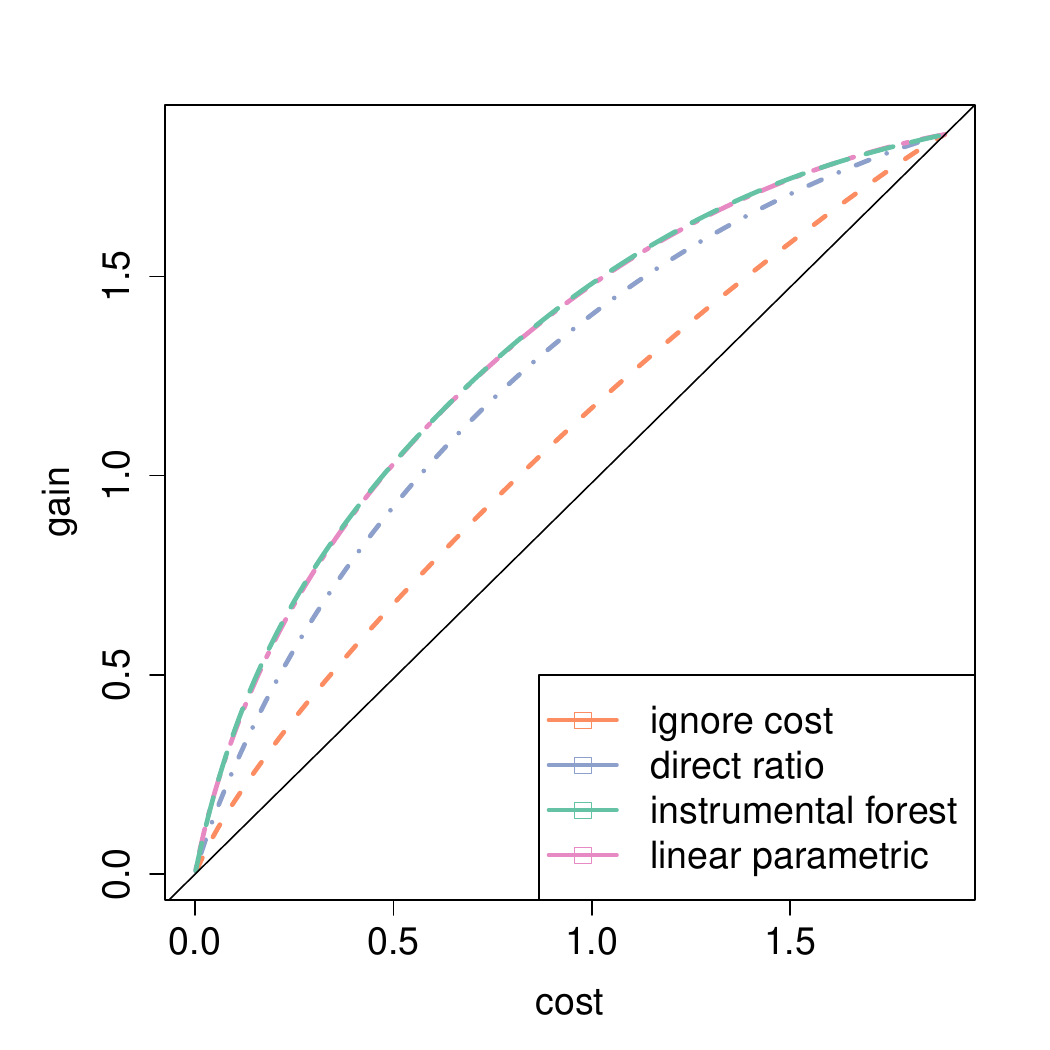} \\
unpredictable cost & partially predictable cost
\end{tabular}
\caption{QINI curves for the simulation settings with predictable and unpredictable costs, averaged over
500 simulation replicates. For each replicate, each method is trained on $n = 500$ samples. The curves are computed for a deployment set of  $n_{deployment} = 10,000$ samples, shared across simulation replicates. For each deployment point $i = 1, \, \ldots, \, n_{deployment}$ we compute the ground truth $\tau_Y(X_i) = \EE{Y_i(1) - Y_i(0) \cond X_i}$,
and the expected cost $\tau_C(X_i) = \EE{C_i(1) \cond X_i}$. Then, given any treatment rule derived from
the training set, we rank the deployment set in decreasing order of the scores used by the treatment rule, and
compute an estimate $\bar R(S_{i_k}) = \frac{1}{n_{deployment}} \sum \limits_{i=1}^{n_{deployment}} \tau_Y(X_i) \mathbbm{1}(S_i \geq S_{i_k}) $ and $\bar B(S_{i_k}) = \frac{1}{n_{deployment}} \sum \limits_{i=1}^{n_{deployment}} \tau_C(X_i)\mathbbm{1}(S_i \geq S_{i_k})$ as cumulative sums along that ranking from $i_{1}, \ldots, i_{n_{deployment}}$.
The above displays are obtained by computing one such QINI curve for each simulation replicate, interpolating
these QINI curves, and then (vertically) averaging the interpolated curves.}
\label{fig:simu}
\end{figure}

In order to evaluate the quality of these treatment rules, we consider results in terms of the QINI curve $Q_{\hat \rho}(b)$ described in Section \ref{sec:costcurve} that maps different
possible budget levels to the value we can get using the considered policy at this budget level.
Figure \ref{fig:simu} compares average deployment set performance of the different priority-based methods in terms of their QINI curves.
In the left panel, with unpredictable costs, there is no visible difference between the four methods. This is as expected,
as the optimal strategy is simply to prioritize units in decreasing order of
$\tau_Y(x) = \EE{Y_i(1) - Y_i(0) \cond X_i = x} = e^{x_1 + x_2 + x_3 + x_4}$. Another setting where ignoring costs can perform reasonably well is when there is negative correlation between $\tau_Y(X_i)$ and $\tau_C(X_i)$. 
In the second setting above, however, there is a divergence between the treatment effect $\tau_Y(x)$ (which remains the same),
and the cost-benefit ratio $\rho(x) = e^{x_1 - x_5}$ we should use for prioritization, and this is reflected in the performance
of different methods. Here, the ``ignore cost'' baseline is targeting the wrong objective, and so performs poorly. The ``direct ratio'' baseline is targeting the correct objective and does better, but still does not match the performance of our proposed method which is designed to focus on $\rho(x)$. Finally, though the priority score is not linear in the data generating process, the parametric method still performs well. 

We note that, here, the function $\tau_Y(x)$ and $\tau_C(x)$ are somewhat aligned, and the induced cost-benefit ratio
function $\rho(x) = \tau_Y(x) / \tau_C(x)$ takes a simpler form than either $\tau_Y(x)$ or $\tau_C(x)$ on its own; specifically
units with large values of $x_2$ or $x_3$ have large values of both $\tau_Y(x)$ and $\tau_C(x)$, and these effects cancel
each other out. This type of structure may arise when there is some group of units that are overall just very responsive to
treatment, in a sense where they both produce considerable value but also incur large costs; and instrumental forests are
well positioned to take advantage of such structure as they can purely focus on fitting $\rho(x)$. In other settings, where
$\tau_Y(x)$ and $\tau_C(x)$ vary in more unrelated ways, the ``direct ratio'' baseline may also be a reasonable candidate for
learning $\rho(x)$.

Computing QINI curves for the methods from the related literature which are not priority-based is computationally difficult, since it requires resolving an optimization problem for each possible budget value on the curve. To compare the performance of the priority-based methods to those in the related literature, we describe results at a fixed budget constraint of 1 in Table \ref{tab:simu}, which is in Appendix \ref{appndx:empirical}. The instrumental forest has the highest lift at this budget level, while the method of \citet{hoch2002something} performs similarly well in terms of lift. The direct ratio approach performs slightly worse. Since the method of \citet{hoch2002something} only meets the budget on the deployment set in expectation, it often violates the budget. The approach of \cite{sun2021empirical} remedies this problem by ensuring the budget is met with high probability on the deployment set, rather than in expectation, but comes at the cost of performance. In contrast, the priority-based methods always spend the correct budget. Furthermore, confidence intervals for the lift computed using the bootstrap have coverage close to 0.95 for the priority-based methods, as expected from the results in Section \ref{sec:costcurve}.

\section{Oregon Health Insurance Experiment}
\label{sec:application2}

In 2008, Oregon conducted a lottery for a limited number of spots in its Medicaid program \citep{finkelstein2012oregon}.
The authors enriched the data on lottery signups with surveys and administrative data and found positive effects of health insurance on self-reported health outcomes, health care utilization,
and financial well-being. This dataset allows us to analyze how a government could optimize a self-reported health outcome under a constraint on Medicaid expenses, for example which depend on the utilization of health services.

For the purpose of our method, the target ``reward'' variable $Y_i$ is self-reported health, which we encode as a binary variable, where $1$ maps to `good', `very good' or `excellent' and 0 maps to `bad' or `fair'.
Meanwhile, we consider two possible ``cost'' variables $C_i$: the number of outpatient visits in the treatment group $C_i$, and the number
of prescribed drugs in the treatment group. We consider the costs $C_i$ to be zero in the control group and non-negative in the treatment group, since we consider our constraint to be on the resources used in the Medicaid expansion.

The baseline survey includes all of the lottery winners as well as an approximately equal number of lottery losers, which amounts to an
initial sample of 58,405 lottery subscribers. 23,777 subjects completed the endline survey in 12 months after the baseline, allowing us to measure the outcome variables.
A few hundred observations are also lost because of incomplete answers in the endline survey, leaving us with a sample of 18,062 when prescribed medications are the cost variable and 23,119 when outpatient visits are the cost variable. \citet{finkelstein2012oregon} check the balance of covariates in their paper and argue that the attrition is balanced across treatment groups
and doesn't invalidate the experiment. We split the sample equally into a training set and a holdout set, stratifying
the split on the number of household members and the assigned treatment.

\begin{figure}
\centering
\begin{tabular}{cc}
\includegraphics[width=0.45\textwidth]{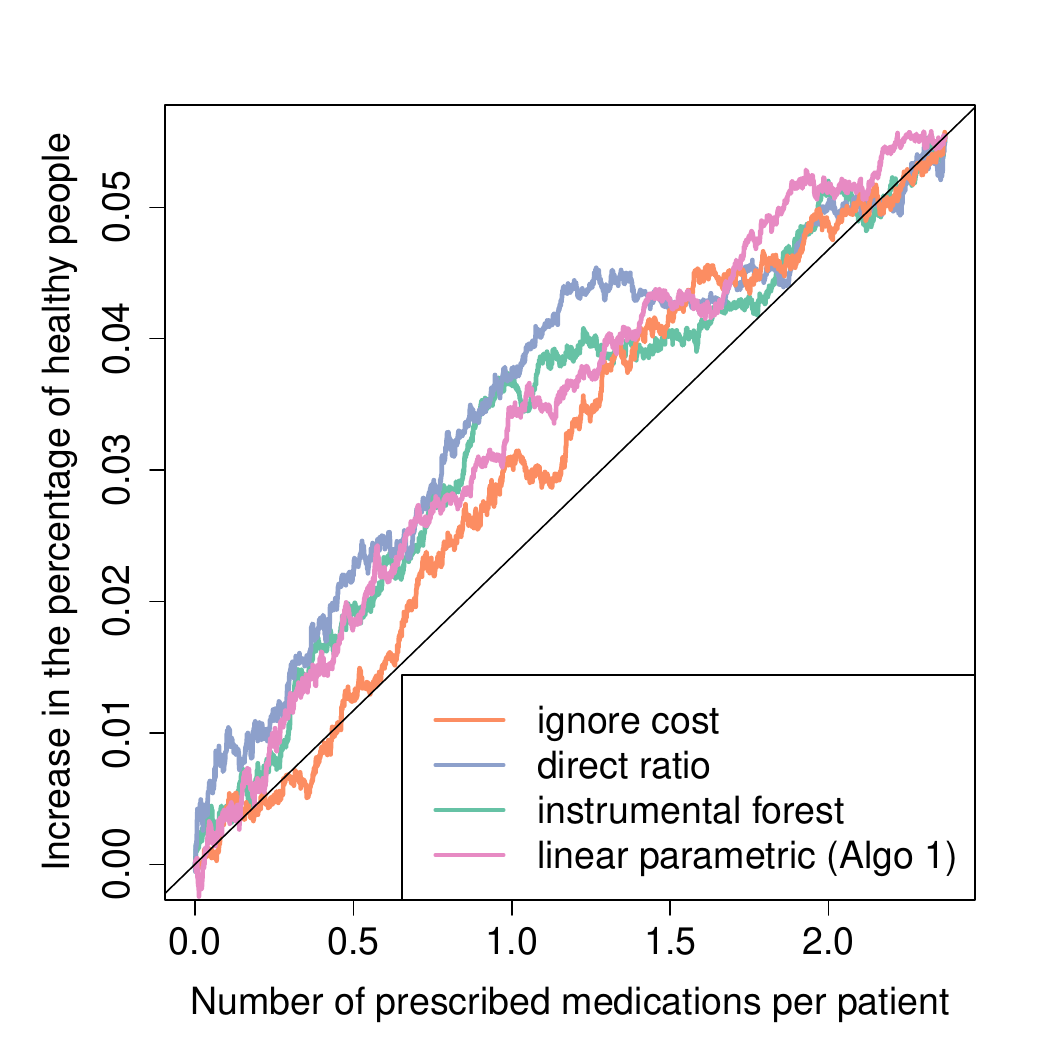} &
\includegraphics[width=0.45\textwidth]{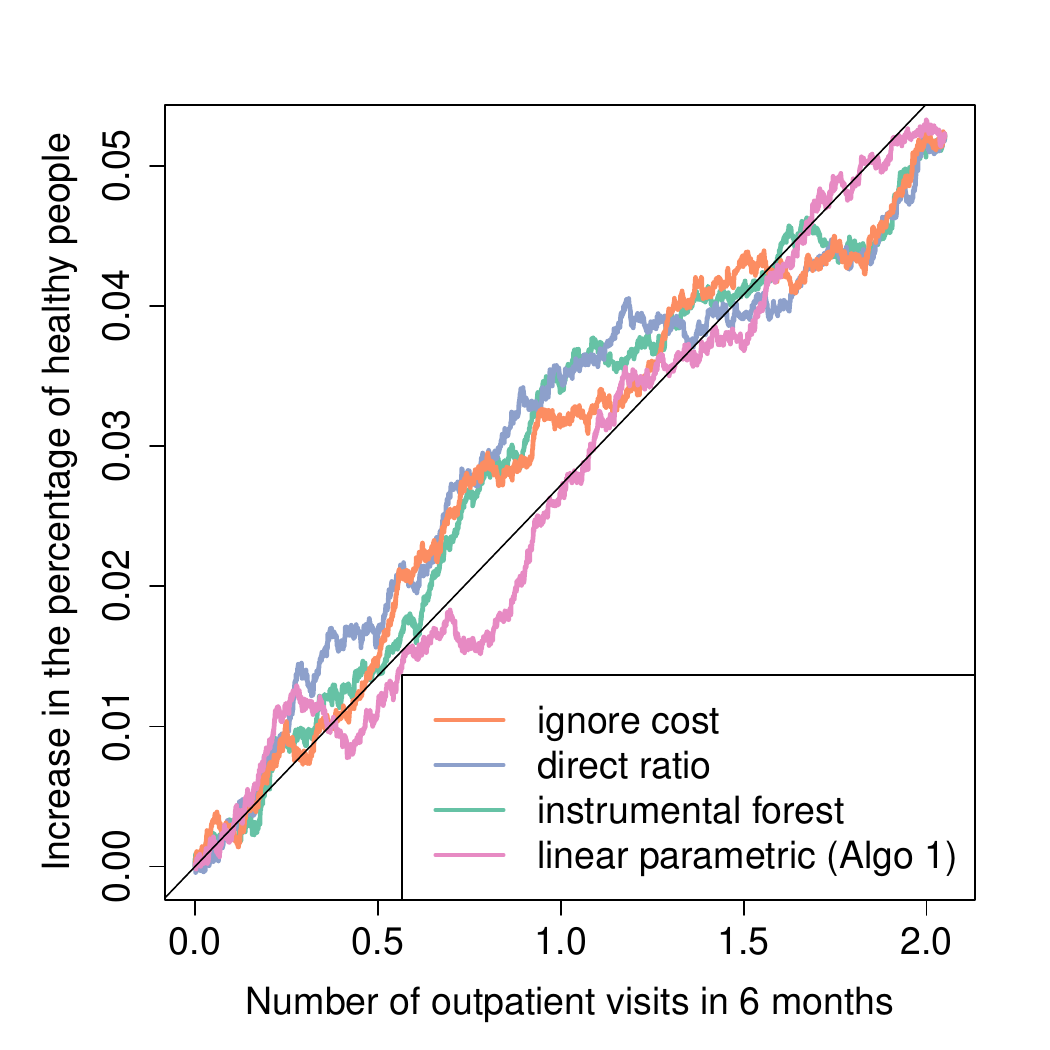}
\end{tabular}
\caption{QINI curves for the Oregon Health Insurance Experiment described in Section \ref{sec:application2}. The total sample size for the left figure is 18062 and 23119 for the right figure, split equally into the holdout and train samples.}
\label{fig:application2}
\end{figure}

Medicaid applies to all family members, while the lottery registrations are individual, therefore the chances of
winning are confounded with the household size $X^\text{(conf)}_i \in \mathbb{Z}$, i.e., members of larger households have a better chance of getting treated; so, we also must estimate the propensity score $h_w(x^\text{(conf)}) = \PP{W_i = 1 \cond X^\text{(conf)}_i = x^\text{(conf)}}$. We use the short demographic characteristics from the registration form, emergency department visits
history and the baseline survey data on demographics, employment, health conditions and past doctor visits to build the model $\hat\rho(X_i)$ of health
improvement per resource usage. We drop some variables from the baseline survey, which could be affected (or are shown in the paper to be affected) by the treatment.
The purpose of this example is to demonstrate the method, therefore we are using all of the available pre-treatment information in learning $\hat\rho$ with instrumental forest.\footnote{When
deploying a method of this type in practice, one would need to audit the covariates used for equity, social and ethical concerns, as well as gameability;
see \citet{athey2020policy} and \citet{kitagawa2018should} for further discussion.} However, we manually select the most powerful predictors for the linear parametric method to avoid curse of dimensionality. The full list of variables in both of the models is included in the Appendix \ref{appndx:empirical}.

We build the QINI curve $\hat Q_{\hat \rho} (b)$ in the same way we did in the previous examples; however, to improve robustness due to using estimated propensity scores $\hat h_w(X^{(conf)}_i)$,
we use a doubly robust adaptation of $\hat V_{\hat \rho}(s)$ and $\hat G_{\hat \rho}(s)$ following \citet{yadlowsky2021evaluating}.
Results are shown in Figure \ref{fig:application2}. In this application, both the instrumental forest and the direct ratio baseline have comparable performance and both noticeably outperform the baseline ``ignore cost'' in the case where we use the number of prescribed medications as a cost variable.

\begin{table} \centering 
\resizebox{\textwidth}{!}{
\begin{tabular}{@{\extracolsep{5pt}}lccccc} 
\\[-1.8ex]\hline 
\hline \\[-1.8ex] 
& IV forest & Direct ratio & Ignore cost & Linear parametric & $\hat Q_{\hat \rho}(1)$ uniform rule \\ 
\hline \\[-1.8ex] 
Medications & 0.0130*** & 0.0143*** & 0.0072 & 0.0113** & 0.0234*** \\
& (0.0050) & (0.0052) & (0.0049) & (0.0052) & (0.0042) \\
\hline \\[-1.8ex] 
Outpatient visits & 0.0086** & 0.0095** & 0.0063 & 0.0013 & 0.0254*** \\
& (0.0042) & (0.0044) & (0.0043) & (0.0045) & (0.0045) \\
\hline
\end{tabular}}

\caption{Lift $\hat \Delta(1)$ relative to random choice, for different prioritization rules and cost variables and bootstrapped standard deviations for them. We also include the $\hat Q_{\hat \rho}(1)$, i.e. the total reward under a budget constraint of 1 under the random choice rule, for reference. The number of observations for the medications outcome variable is 18,062 (9,051 in the holdout sample) and for outpatients visits it is 23,119 (11,602 in the holdout sample) The standard deviations are in parentheses and are clustered at the household level. The stars denote confidence levels: $^{*}$p$<$0.1; $^{**}$p$<$0.05; $^{***}$p$<$0.01}
\label{table:ohie}
\end{table} 
Finally, we also present the estimated lifts $\hat \Delta(1)$ for a chosen budget of 1 prescribed medication or 1 outpatient visit per person. We estimate standard errors using
a bootstrap clustered at the household level. Results are presented in Table \ref{table:ohie}. The instrumental forest and the direct ratio methods significantly
outperform a random choice rule. Conversely, the baseline that ignores costs doesn't give a statistically significant lift for this budget level. Quantitatively, if we have
a budget that allows us to prescribe on average 1 medication per patient among new Medicaid enrollees, then targeting using instrumental forests lets us improve the share of healthy individuals from
2.3\% to 3.6\%. To summarize Figure \ref{fig:application2} using a single metric, we also report the QINI coefficient, as defined in Section \ref{sec:costcurve}, in Table \ref{table:roc} of Appendix \ref{appndx:empirical}. The metric shows that both instrumental forest and the direct ratio method vastly outperform a treatment allocation policy that ignores costs.

\section{Discussion}

In this paper, we considered the problem of optimally targeting a treatment
under budget constraints, while allowing the cost of treating different people to be both
variable and uncertain. Problems with this structure appear frequently in medicine, marketing,
and other areas. Here, we derived the form of the optimal prioritization rule using the solution of \citet{dantzig1957discrete}
to the fractional knapsack problem, and established a statistical connection to the problem of
heterogeneous treatment effect estimation with instrumental variables that allowed us to develop
a number of estimators for the optimal prioritization rule, including one that
re-purposes off-the-shelf random forest software from \citet*{athey2019generalized}.
In the simulation and empirical applications, the proposed approach shows considerable promise in helping us effectively learn whom to treat.

To conclude, we briefly discuss a few possible extensions. Unlike many other works in the policy learning literature, we do not impose functional restrictions on the policy class, and instead impose some restrictions on the complexity of the data-generating process, such as linearity or smoothness of $\rho(x)$.  These restrictions on the data-generating process avoid the negative result in \citet{stoye2009minimax}, even in the absence of constraints on the policy class. Under general functional restrictions, the optimal rule may not have a priority-based structure, and a policymaker could instead solve an empirical version of the constrained optimization problem directly over the policy class. 

With multiple treatments, it is still possible to estimate incremental conditional benefit-cost ratios for each treatment and each individual in the sample. However, since there are multiple ratios for each individual, a priority-based approach no longer follows directly from the estimation of the ratios. Further work is needed to construct a priority-based approach that solves the multiple treatment problem with uncertain costs and benefits.

Last, we provide inference for the QINI curve that is pointwise. A policymaker that wants to use the QINI curve to make a decision, for example on the choice of budget to maximize lift, would instead prefer uniform inference. Providing uniform inference on the QINI curve is also an interesting subject for future work.

\newpage 

\bibliographystyle{plainnat}
\bibliography{references}

\newpage

\appendix

\section{Proofs} \label{appendix}
\subsection*{Proof of Theorem \ref{theo:pi_opt}} \label{app:theo1}

To ease the presentation, we first define the conditional average treatment effect function for both
rewards and costs as
$$\tau_C(x) = \mathbb E[C_i(1) - C_i(0) \cond X_i=x], \quad \tau_Y(x) = \mathbb E[Y_i(1) - Y_i(0) \cond X_i=x].$$
Because $C_i(1) \geq C_i(0)$ almost surely, we see that $G(\rho) = E[I\{\rho(X_i) > \rho\}\tau_C(X_i)]$ is a non-increasing function of $\rho$. Let
$$\eta_B := \inf \{\rho: G(\rho) \le B\}, \ \ \ \ \rho_B = \max\{\eta_B, \, 0\}.$$
The claimed optimal (stochastic) decision rule in  \eqref{eq:pi_opt} can then be rewritten as
\begin{equation}  \label{app:theo1-pi-stochastic}
\pi^*_B(x) =
\begin{cases}
a_B  & \text{if }  \ \rho(x) = \rho_B, \\
1 & \text{if } \ \rho(x) > \rho_B,
\end{cases}
\end{equation}
where
\begin{equation} \label{app:theo1-aB}
a_B =
\begin{cases}
0  & \text{if }  \ E[I\{\rho(X_i) = \rho_B\}\tau_C(X_i)] = 0, \\
\min\left\{\frac{B - E\left[ I\{\rho(X_i) > \rho_B\} \tau_C(X_i) \right]}{E[I\{\rho(X_i) = \rho_B\}\tau_C(X_i)]}, \, 1\right\} & \text{if } \ E[I\{\rho(X_i) = \rho_B\}\tau_C(X_i)] > 0.
\end{cases}
\end{equation}
Note that $\pi^*_B(x)$ and $I\{\rho(x) > \rho_B\}$ are almost surely equal if $\PP{\rho(X_i) = \rho_B} = 0$ or if $\eta_B < 0$, and they should return the same decision in these settings. Moreover, $E[\pi^*_B(X_i)\tau_C(X_i)] = B$ if $\rho_B > 0$.

To verify that the above rule is in fact optimal, let
$r(X_i)$ denote any other stochastic treatment rule which satisfies the budget constraint $B$. It remains to argue that
$$E[\tau_Y(X_i) \pi_B^*(X_i)] \geq E[\tau_Y(X_i) r(X_i)],$$
i.e., that $r(X_i)$ cannot achieve higher rewards than $\pi^*_B$ while respecting the budget. From now on, we assume that $\tau_C(X_i) > 0$ almost surely, i.e., that there are no units that are free to treat in expectation; because if there are units with $\tau_C(X_i) = 0$ then clearly one should just treat them according to the sign of $\tau_Y(X_i)$ (as is done by our policy), and this has no budget implications. Given this setting, we see that
\begin{equation}
\label{eq:LB}
\begin{split}
E[\tau_Y(X_i) (\pi_B^*(X_i) -  r(X_i))]
&= E[ \rho(X_i) \tau_C(X_i) (\pi_B^*(X_i) -  r(X_i))] \\
&\geq \rho_B E[ \tau_C(X_i) (\pi_B^*(X_i) -  r(X_i))],
\end{split}
\end{equation}
where the inequality follows by observing that, by definition of
$\pi_B^*$, we must have $\pi_B^*(X_i) -  r(X_i) \geq 0$ whenever $\rho(X_i) > \rho_B$ and $\pi_B^*(X_i) -  r(X_i) \leq 0$ whenever $\rho(X_i) < \rho_B$.

We conclude by considering two cases: Either $\rho_B > 0$ or $\rho_B = 0$. In the first case, we know that $\pi^*_B$ spends the whole budget, i.e., $E[ \tau_C(X_i) \pi_B^*(X_i)] = B$; thus, by the budget constraint on $r(X_i)$ (i.e., $E[ \tau_C(X_i) r(X_i)] \leq B$), we see that $E[\tau_Y(X_i) (\pi_B^*(X_i) -  r(X_i))] \geq 0$. Meanwhile, in the second case, the lower bound in \eqref{eq:LB} is 0, and so our conclusion again holds.
Finally, by an extension of the same argument, we see that when $\PP{\rho(X_i) = \rho_B} = 0$, our policy $\pi^*_B(x)$ is  almost surely equivalent to $I\{\rho(x) > \rho_B\}$, and is both deterministic and the unique reward-maximizing decision rule that respects the budget constraint.

\subsection*{Proof of Proposition \ref{prop:unconf}}\label{app:propo2}

In this section, we show the equation in Proposition \ref{prop:unconf}. Assume $W_i \in \{0, 1\}$ and let $h_w(x) := \PP{W_i = 1 \cond x}$. Notice that
\begin{align} \label{app:prop2-cov}
    \begin{split}
        & \Cov{Y_i, \, W_i \cond X_i=x} \\
        =& E[Y_iW_i \cond X_i=x] - E[Y_i \cond X_i = x]E[W_i \cond X_i=x] \\
        =& E[Y_i(1)W_i \cond X_i =x] - E[Y_i \cond X_i = x]E[W_i \cond X_i =x] \\
        =& h_w(x)E[Y_i(1) \cond X_i =x] - h_w(x)^2E[Y_i(1) \cond X_i =x] - h_w(x)\{1-h_w(x)\}E[Y_i(0) \cond X_i = x] \\
        =& h_w(x)\{1-h_w(x)\}\{E[Y_i(1) \cond X_i = x] - E[Y_i(0) \cond X_i = x]\} \\
        =& h_w(x)\{1-h_w(x)\}\tau_Y(x),
    \end{split}
\end{align}
where the second equality comes from the consistency assumption that $Y_i = W_iY_i(1) + (1-W_i)Y_i(0)$ and the third equality comes from the unconfoundedness assumption in Proposition \ref{prop:unconf}. Similarly, we can show that
\begin{equation*}
    \Cov{C_i, \, W_i \cond X_i = x} = h_w(x)\{1-h_w(x)\}\tau_C(x)
\end{equation*}
and thus
\begin{align} \label{app:prop2-rho}
    \begin{split}
        & \frac{\Cov{Y_i, \, W_i \cond X_i = x}}{\Cov{C_i, \, W_i \cond X_i = x}} \\
        =& \frac{h_w(x)\{1-h_w(x)\}\tau_Y(x)}{h_w(x)\{1-h_w(x)\}\tau_C(x)} \\
        =& \frac{\tau_Y(x)}{\tau_C(x)} \\
        =& \rho(x),
    \end{split}
\end{align}
which completes the proof of the Proposition \ref{prop:unconf}.

\subsection*{Proof of Theorem \ref{theo:rate}}

Recall that we are in a decision rule where all policies are non-randomized, i.e.,
$\pi^*(x) = 1[\rho(x) > \rho^*_B]$ and $\hat{\pi}(x) = 1[\hrho(x) > \hrho_B]$. Now,
\begin{equation*}
\begin{aligned}
V(\pi^*_B) - V(\hat{\pi}_B) &= \mathbb{E}_T[\tau_Y(X_i)(\pi^*_B(X_i) - \hat{\pi}_B(X_i))] \\
&= \mathbb{E}_T[\rho(X_i)\tau_C(X_i)(\pi^*_B(X_i) - \hat{\pi}_B(X_i))] \\
&= \hrho_B \mathbb{E}_T[\tau_C(X_i)(\pi^*_B(X_i) - \hat{\pi}_B(X_i))] \\
&\quad\quad\quad\quad + \mathbb{E}_T[\tau_C(X_i)(\rho(X_i) - \hrho_B)(\pi^*_B(X_i) - \hat{\pi}_B(X_i))] \\
&= \hrho_B \delta + \mathbb{E}_T[\tau_C(X_i) (\rho(X_i) - \hrho_B)(\pi^*_B(X_i) - \hat{\pi}_B(X_i))].
\end{aligned}
\end{equation*}
We study the term $(\rho(X_i) - \hrho_B)(\pi^*_B(X_i) - \hat{\pi}_B(X_i))$ by cases. If
$\pi^*(x) = 1$ and $\hat{\pi}_B(x) = 0$, then $\hrho(x) \leq \hrho_B$ and so
$$  (\rho(X_i) - \hrho_B)(\pi^*_B(X_i) - \hat{\pi}_B(X_i)) \leq  (\rho(X_i) - \hrho(x))(\pi^*_B(X_i) - \hat{\pi}_B(X_i)). $$
If $\pi^*(x) = 0$ and $\hat{\pi}_B(x) = 1$, then $\hrho(x) > \hrho_B$ and now $\pi^*_B(X_i) - \hat{\pi}_B(X_i) = -1$, so
$$  (\rho(X_i) - \hrho_B)(\pi^*_B(X_i) - \hat{\pi}_B(X_i)) \leq  (\rho(X_i) - \hrho(x))(\pi^*_B(X_i) - \hat{\pi}_B(X_i)). $$
Finally, if $\pi^*(x) = \hat{\pi}_B(x)$ the term is obviously 0. We thus conclude
\begin{align*}
V(\pi^*_B) - V(\hat{\pi}_B)
&\leq \mathbb{E}_T[\tau_C(X_i) (\rho(X_i) - \hrho(x))(\pi^*_B(X_i) - \hat{\pi}_B(X_i))] + \hrho_B \delta \\
&\leq \mathbb{E}_T[\tau_C(X_i) |\rho(X_i) - \hrho(x)|] + \hrho_B \delta,
\end{align*}
recalling that $\tau_C(x) \geq 0$ by assumption.

\subsection*{Derivation of Equation \ref{eq:conditional_mom}}

When $\rho(x) = x' \beta$, then the equation in Proposition \ref{prop:unconf} is equivalent to
\[ x'\beta = \frac{\Cov{Y_i, \, W_i \cond X_i = x}}{\Cov{C_i, \, W_i \cond X_i = x}} \]
Using the definition of conditional covariance, and rearranging, we have that
\begin{align*}  &  \mathbb E[ (W_i - h_w(X_i)) (C_i - h_c(X_i) )  | X_i =x]  x'\beta  = \mathbb E[ (W_i - h_w(X_i) ) (Y_i - h_y(X_i)) | X_i = x]  \\
& 0  =    \mathbb E[ (W_i - h_w(X_i) ) (Y_i - h_y(X_i)) | X_i = x]  - \mathbb E[  (W_i - h_w(X_i)) (C_i - h_c(X_i) )  X_i' \beta | X_i =x] \\
 & 0  = \mathbb E[(W_i - h_w(X_i)) ((Y_i - h_y(X_i)) - (C_i - h_c(X_i)) X_i' \beta) | X_i = x]
\end{align*}
This is equivalent to
\[ \mathbb E[ e_i(\beta, h(X_i)) | X_i = x] = 0 \]

\subsection*{Proof of Theorem \ref{thm:dml}}

We will use Theorem 3.1 and Theorem 3.2 of \citet{chernozhukov2018double}, therefore we need to verify the Assumptions 3.1 and 3.2 from the paper, which will complete the proof.

We have $\beta$ that satisfies the following unconditional moment restriction
\[ \mathbb E[\psi_i(\beta, h(X_i))] = 0, \]
where the score function is 
\begin{align}
\psi_i(\beta, h(X_i))&= X_iV_iU_i = X_i[W_i -h_w(X_i)][Y_i - h_y(X_i) -(C_i - h_c(X_i)) X_i' \beta] \notag\\&= \psi_i^{(0)}(h(X_i)) + \psi_i^{(1)}(h(X_i))\beta. \label{eq:score}
\end{align}

The score function is linear in $\beta$. This verifies Assumption 3.1b) of \citet{chernozhukov2018double}. To apply the Theorem, we  must verify the remaining components of Assumption 3.1 and Assumption 3.2.

3.1a) is satisfied, since the equation in Proposition \ref{prop:unconf} is equivalent to $\mathbb E[\psi_i(\beta, h(X_i))] = 0$ under the linearity assumption for $\rho(x)$.

3.1c) is satisfied, since the score function is linear in both $\beta$ and the nuisance parameters, it is twice differentiable in the nuisance parameters. For 3.1d), we show Neyman-Orthogonality by showing that the partial derivative, evaluated at zero, of the conditional moment restriction with respect to each component of a perturbation of the nuisance functions is zero. Then, the Law of Iterated Expectations implies Neyman-Orthogonality for the unconditional score function.
\begin{align*}
\frac{\partial \mathbb E[ e_i (\beta, h(x) + \epsilon \delta(x) )| X_i = x] }{ \partial \epsilon_y } \Big \rvert_{\epsilon = 0} &=  - \delta_y(x) \mathbb E[W_i - \mathbb E[W_i | X_i = x] | X_i  = x ]
\\ &= 0
\end{align*}
Similarly,
\begin{align*}
\frac{\partial \mathbb E[ e_i (\beta, h(x) + \epsilon \delta(x) )| X_i = x] }{ \partial \epsilon_c } \Big \rvert_{\epsilon = 0} &= X^\prime \beta \delta_c(x) \mathbb E[W_i - \mathbb E[W_i | X_i = x] | X_i  = x]
\\ &=  0.
\end{align*}
Lastly,
\begin{align*}
\frac{\partial \mathbb E[ e_i (\beta, h(x) + \epsilon \delta(x) )| X_i = x] }{ \partial \epsilon_w } =&  -\delta_w(x) \mathbb E[Y_i - \mathbb E[Y_i | X_i= x] | X_i = x] \\&+ X^\prime \beta \delta_w(x) \mathbb E[C_i - \mathbb E[C_i | X_i = x] | X_i  = x]
\\ ={}& 0.
\end{align*}
For 3.1e), we need that  $\mathbb E[V_i D_i X_iX_i']$ is invertible and
\begin{align*}
\mathbb E[ V_i D_i X_i X_i'] & = \mathbb E[(W_i - \mathbb E[W_i | X_i])( C_i - \mathbb E[C_i | X_i])X_i X_i'] \\
& = \mathbb E_x \left[\mathbb E\left[(W_i - \mathbb E[W_i \mid X_i])( C_i - \mathbb E[C_i \mid X_i])X_iX_i' \mid X_i\right]\right]\\
& = \mathbb E_x[X_i X_i' \mbox{Cov}(C_i, W_i | X_i) ].
\end{align*}

Since we are in the setting of the Proposition \ref{prop:unconf} and unconfoundedness applies as well as the overlap condition $0 < h_w(x) <1$, we have that
\[ \mbox{Cov}(C_i, W_i | X_i) = h_w(x) ( 1- h_w(x)) \mathbb E[C_i (1) - C_i(0) | X_i = x] > 0, \]
where the inequality is from Assumption \ref{as:c}. Then, $\mathbb E[V_i D_i']$ is invertible as long as $\mathbb E[X_i X_i']$ is full rank, which is by assumption, further, the singular values of $X_i X_i'$ are bounded from above, since $X_i$ are bounded.

We now verify the assumptions 3.2a) through c). The point is to show various bounds on $\psi^{(1)}$ and $\psi$ defined in \eqref{eq:score} with constants $a, A$ and a sequence $\delta_n$, featuring in the Assumptions \ref{as:c}, \ref{assu:nuisance_rates}, \ref{assu:regularity}.

We will first introduce and discuss additional notation: \cite{chernozhukov2018double} Assumption 3.2 requires bounds involving some vector or matrix norm $\|A\|$, which we chose to be an $\infty$ norm: $\|A\|_\infty = \max_{ij} |A_{ij}|$. For our own purposes we will also use the $q$ norm of a random scalar denoted as $\|\xi\|_{P,q} = \mathbb E_T[|\xi|^q]^{1/q}$. Also $\mathbf{1}_m$ means a column vector of $1$ of a size m.

\paragraph{Useful Inequalities.}

Before turning to the verification of the assumptions, we will derive some useful bounds, which are used throughout the proof.

Throughout all of the derivations we will use the following inequalities (for any $p < q$), which hold by Assumptions \ref{assu:nuisance_rates} and \ref{assu:regularity}:
\begin{equation}
   \begin{gathered}
\|W_i - \hat h_w(X_i)\|_{P, p} \leq \|W_i - \hat h_w(X_i)\|_{P, q} \leq \|h_w(X_i) - \hat h_w(X_i)\|_{P, q} + \|W_i\|_{P, q} + \|h_w(X_i)\|_{P, q} \leq 3A \\
\|C_i - \hat h_c(X_i)\|_{P, p} \leq \|C_i - \hat h_c(X_i)\|_{P, q} \leq \|h_c(X_i) - \hat h_c(X_i)\|_{P, q} + \|C_i\|_{P, q} + \|h_c(X_i)\|_{P, q} \leq 3A \\
\|Y_i - \hat h_y(X_i)\|_{P, p} \leq \|Y_i - \hat h_y(X_i)\|_{P, q} \leq \|h_y(X_i) - \hat h_y(X_i)\|_{P, q} + \|Y_i\|_{P, q} + \|h_y(X_i)\|_{P, q} \leq 3A. \label{eq:norm_bounds}
\end{gathered} 
\end{equation}
We can replace the RHS by $2A$ if we have a population version of $h(X_i)$ on the left hand side, which will be useful for derivation of a bound on $\beta$.

\paragraph{Bounding $\beta$}

Let us first bound $\mathbb{E}[(W_i - h_w(X_i))(C_i - h_c(X_i))X_iX_i']$. 
Under Assumption \ref{assu:unconf}: $\mathbb{E}[(W_i - h_w(X_i))(C_i - h_c(X_i))X_iX_i'] = \mathbb{E}[X_iX_i'\mathbb{E}[(W_i - h_w(X_i))(C_i - h_c(X_i)) \mid X_i]]  = \mathbb{E}[X_iX_i'\mathbb{E}[C_i(1) - C_i(0) \mid X_i]]  = \mathbb{E}[X_iX_i'\tau_C(X_i)]$.

We use the assumption that the matrix $\EE{X_iX_i'}$ is invertible, therefore it's singular values are bounded from below. Let us also assume that the constant $a$ is low enough so that $2a$ is a valid bound for singular values of $\mathbb{E}[X_iX_i']$. Therefore $\mathbb{E}[X_iX_i'] \geq 2a I$

Let us define $S(c) = \mathbb{E}[X_iX_i'  1\{\tau_C(X_i) >
c\}]$, and also define $\bar c = \sup\{c : S(c) \geq a I\}$

By Assumption \ref{as:c} $P(\tau_C(X_i) = 0) = 0$, therefore $S(0) \geq 2a I$. Also $S(0)$ is right-continuous, therefore $\bar{c} > 0$. To save on notation let us also assume that $a$ is small enough that $\bar{c} \geq a$.

$\mathbb{E}[\tau_C(x) X_i X_i'] \geq \mathbb{E}[\tau_C(x) X_i X_i' 1\{\tau_C(x) >
a\}] \geq a \mathbb{E}[X_i X_i'  1\{\tau_C(x) >
a\}] \geq a \bar c I \geq a^2 I$

Therefore we established a lower bound on $\mathbb{E}[(W_i - h_w(X_i))(C_i - h_c(X_i))X_i X_i']$, now we can derive an upper bound on $\beta$:

\begin{gather}
\begin{aligned}
\beta & = \mathbb{E}(X_i X_i'(W_i - h_w(X_i))(C_i - h_c(X_i)))^{-1}\mathbb{E}(X_i'(W_i - h_w(X_i))(Y_i - h_y(X_i))) \\
& \leq a^{-2}\mathbf{1}_m A \|W_i - h_w(X_i)\|_{P, 2} \|Y_i - h_y(X_i)\|_{P, 2} \\
& \leq a^{-2}\mathbf{1}_m A \|W_i - h_w(X_i)\|_{P, q} \|Y_i - h_y(X_i)\|_{P, q} \\
& \leq \mathbf{1}_m 4 a^{-2} A^2. \label{eq:beta}
\end{aligned}
\end{gather}

Further, we will use the bound on $X_i$. This will give the following related bounds:
\begin{equation*}
  \|X_i\|_{\infty} \le A 
  \qquad 
  \|X_i X_i'\|_{\infty} \le A^2 
  \qquad 
  \|X_i X_i' \beta\|_{\infty} \le 4 a^{-2} m A^4
\end{equation*}

\paragraph{Verifying assumption 3.2a) of \cite{chernozhukov2018double}} Let the realization set $\mathcal{T}_N$ be the set of estimates satisfying the conditions in the Assumption \ref{assu:nuisance_rates}. Establishing the bounds below we will consider $\hat h(X_i)$ functions from this realization set.

We use the Holder inequality, a bound on $\|X_iX_i'\|_{\infty}$ and the previously derived bounds to derive a bound on $(E_T[\|\psi^{(1)}(\hat h(X_i))\|_\infty^{q / 2}])^{2 / q}$:
\begin{equation*}
\begin{aligned}
\left(E_T\left[\left\|\psi^{(1)}(\hat h(X_i))\right\|_\infty^{q / 2}\right]\right)^{2 / q} & =\|X_iX_i'(C_i - \hat h_c(X_i))(W_i-\hat h_w(X_i))\|_{P, q / 2} \\
&\leq \|X_iX_i'\|_{\infty}\|C_i - \hat h_c(X_i)\|_{P, q}\|W_i-\hat h_w(X_i)\|_{P, q} \leq A^2 9A^2 = 9A^4.
\end{aligned}
\end{equation*}

Now we will reuse the bound above to verify the second equation of Assumption 3.2b) of \cite{chernozhukov2018double}. We also use the established bound on $\beta$ \eqref{eq:beta}:
\begin{equation*}
\begin{aligned}
(E[\|\psi(\beta,\hat h(X_i))\|_\infty^{q / 2}])^{2 / q} ={}& \|\psi(\beta, \hat h(X_i))\|_{P, q / 2} \\
 ={}& \|X_i(W_i-\hat h_w(X_i))(Y_i - \hat h_y(X_i) - X_i'(C_i - \hat h_c(X_i)) \beta)\|_{P, q / 2} \\
 \leq{}& \|X_i(Y_i - \hat h_y(X_i))(W_i-\hat h_w(X_i))\|_{P, q / 2} \\&+ \|X_iX_i'\beta(C_i - \hat h_c(X_i))(W_i-\hat h_w(X_i))\|_{P, q / 2} \\
 \leq{}& A\|Y_i - \hat h_y(X_i)\|_{P, q}\|W_i - \hat h_w(X_i)\|_{P, q} \\&+ 4a^{-2}mA^4\|C_i - \hat h_c(X_i)\|_{P, q}\|W_i-\hat h_w(X_i)\|_{P, q} \\
 \leq{}& 9A^3 + 36a^{-2}mA^6.
\end{aligned}
\end{equation*}

Therefore we established an upper bound on $(E[\|\psi(\beta,\hat h(X_i))\|_\infty^{q / 2}])^{2 / q}$ and $(E[\|\psi^{(1)}(\hat h(X_i))\|_\infty^{q / 2}])^{2 / q}$ as required by the assumption.

\paragraph{Verifying Assumptions 3.2c) of \cite{chernozhukov2018double}}. Here we need to show the convergence to 0 of $\|E_T[\psi^{(1)}(\hat h(X_i))]-E_T[\psi^{(1)}(h(X_i))]\|_\infty$, $(E_T[\|\psi(\beta, \hat h(X_i))-\psi(\beta, h(X_i))\|_\infty^2])^{1 / 2}$ and $\sqrt{n}\|\partial_r^2 E_T[\psi(\beta, h(X_i) + r(\hat h(X_i)-h(X_i)))]\|_\infty$.

For the first equation we use boundedness of $X$, Assumption \ref{assu:nuisance_rates} and the bounds \eqref{eq:norm_bounds}:
\begin{equation*}
\begin{aligned}
\|E_T&[\psi^{(1)}(\hat h(X_i))]-E_T[\psi^{(1)}(h(X_i))]\|_\infty
\\={}& \|E_T[(W_i - \hat h_w(X_i))(C_i - \hat h_c(X_i))X_iX_i' - (W_i - h_w(X_i))(C_i - h_c(X_i))X_iX_i']\|_\infty \\
\leq {}& A^2\|(W_i - \hat h_w(X_i))(C_i - \hat h_c(X_i)) - (W_i - h_w(X_i))(C_i - h_c(X_i))\|_{P, 1} \\
\leq {}& A^2\|(h_w(X_i) - \hat h_w(X_i))(C_i - h_c(X_i))\|_{P, 1} \\
& + A^2\|(W_i - h_w(X_i))(h_c(X_i) - \hat h_c(X_i))\|_{P, 1} \\
&  + A^2\|(h_w(X_i) - \hat h_w(X_i))(h_c(X_i) - \hat h_c(X_i))\|_{P, 1} \\
\leq {}& A^2\|h_w(X_i) - \hat h_w(X_i)\|_{P, 2}\|C_i - h_c(X_i)\|_{P, 2} \\
& + A^2\|W_i - h_w(X_i)\|_{P, 2}\|h_c(X_i) - \hat h_c(X_i)\|_{P, 2} \\
& + A^2\|h_w(X_i) - \hat h_w(X_i)\|_{P, 2}\|h_c(X_i) - \hat h_c(X_i)\|_{P, 2} \\
\leq {}& 4A^3 \delta_n + A^3 \delta_n/\sqrt{n}.
\end{aligned}
\end{equation*}

Deriving the next inequality, we use the boundedness of conditional variance of $U_i$, the fact that $\|\hat h_w(X_i) - h_w(X_i)\|_\infty$ and $\|V_i\|_\infty$ are less than 1 (both $\hat h_w(X_i)$, $h_w(X_i)$ map into $[0, 1]$, $W \in \{0, 1\}$), and the bounds on $X_i$ and $\beta$:
\begin{equation*}
\begin{aligned}
(E_T[\|\psi&(\beta, \hat h(X_i))-\psi(\beta, h(X_i))\|_\infty^2])^{1 / 2}= \\
={}&\|X_i(V_i + h_w(X_i) - \hat h_w(X_i))(U_i + h_y(X_i) - \hat h_y(X_i) + (h_c(X_i) - \hat h_c(X_i))X_i'\beta) - V_iU_i\|_{P, 2} \\
\leq{}& A\|(h_w(X_i)-\hat h_w(X_i))U_i\|_{P, 2} + A\|(h_y(X_i) - \hat h_y(X_i))V_i\|_{P, 2} + 4a^{-2}mA^4\|(h_c(X_i) - \hat h_c(X_i))V_i\|_{P, 2} \\
& + A\|(h_y(X_i) - \hat h_y(X_i))(h_w(X_i)-\hat h_w(X_i))\|_{P, 2} + 4a^{-2}mA^4\|(h_c(X_i) - \hat h_c(X_i)) (h_w(X_i)-\hat h_w(X_i))\|_{P, 2}\\
\leq{} & A\|h_w(X_i)-\hat h_w(X_i)\|_{P, 2} + A^2\|\hat h_y(X_i) - h_y(X_i)\|_{P, 2} + 4a^{-2}mA^5\|(\hat h_c(X_i) - h_c(X_i))\|_{P, 2}\\
& + A\|(h_y(X_i) - \hat h_y(X_i))\|_{P, 2} \\&+ 4a^{-2}mA^4\|(h_c(X_i) - \hat h_c(X_i))\|_{P, 2} \\
\leq {} & (2A + A^{2} + 4a^{-2}mA^5 + 4a^{-2}mA^4)\delta_n\\
\end{aligned}
\end{equation*}

Finally, let
\begin{equation*}
  f(r) = \EE[T]{X_i(U_i - r(\hat h_y(X_i) - h_y(X_i)) + r(\hat h_c(X_i) - h_c(X_i))X_i'\beta)(V_i - r(\hat h_w(X_i) - h_w(X_i))}.
\end{equation*}
The derivative: 
\begin{gather*}
\begin{aligned}
  \partial f(r) ={}& \EE[T]{X_i(\hat h_y(X_i) - h_y(X_i))(V_i - r(\hat h_w(X_i) - h_w(X_i))} \\
  &+ \EE[T]{X_i(\hat h_c(X_i) - h_c(X_i))X_i'\beta(V_i - r(\hat h_w(X_i) - h_w(X_i))} \\
  &- \EE[T]{X_i(U_i - r(\hat h_y(X_i) - h_y(X_i)) + r(\hat h_c(X_i) - h_c(X_i))X_i'\beta)(\hat h_w(X_i) - h_w(X_i))}
\end{aligned}
\end{gather*}
\begin{gather*}
  \partial^2 f(r) = 2\EE[T]{X_i((\hat h_y(X_i) - h_y(X_i)) - (\hat h_c(X_i) - h_c(X_i))X_i'\beta)(\hat h_w(X_i) - h_w(X_i))}
\end{gather*}

We can bound
\begin{equation*}
\begin{aligned}
  | \partial^2 f(r) | &\leq 2 \|X_i(\hat h_y(X_i) - h_y(X_i))(\hat h_w(X_i) - h_w(X_i)\|_{P, 1} \\&+ 2\|X_iX_i'\beta(\hat h_c(X_i) - h_c(X_i))(\hat h_w(X_i) - h_w(X_i))\|_{P, 1} \\
  &\leq 2A\delta_n/\sqrt{n} + 8a^{-2}mA^4\delta_n/\sqrt{n}
\end{aligned}
\end{equation*}

This establishes the convergence to 0 of $\|E_T[\psi^{(1)}(\hat h(X_i)) - E_T[\psi^{(1)}(h(X_i))]\|_\infty$, $(E_T[\|\psi(\beta, \hat h(X_i)) - \psi(\beta, h(X_i))\|_\infty^2])^{1 / 2}$ and $\sqrt{n}\|\partial_r^2 E_T[\psi(\beta, h(X_i) + r(\hat h(X_i) - h(X_i)))]\|_\infty$

Assumption 3.2 d) also requires that the variance of the score $\mathbb E[V^2_iU^2_iX_iX_i']$ is non-degenerate. $\EE{V^2_iU^2_iX_iX_i'} = \EE{\EE{V^2_iU^2_i \cond X_i }X_iX_i'} \geq a
\EE{X_iX_i'}$, which is full rank by assumption.

Given we have verified that Assumptions 3.1 and 3.2 hold, then the result of Theorem \ref{thm:dml} comes directly from Theorem 3.1 of \citet{chernozhukov2018double}.

\subsection*{Proof of Theorem \ref{thm:grf}}

To establish the result of Theorem \ref{thm:grf}, we will use Theorem 5 of \citet{athey2019generalized}. The procedure described in Section \ref{sec:nonparametric} applies the generalized random forest algorithm to the conditional moment restrictions \eqref{eq:conditional_mom} on parameters $(\rho, h_w, h_y, h_c)$. To simplify the proof, we will reparametrize the moment restrictions. We will replace $h_y$ and $h_c$ with a single nuisance parameter: $h(x) = h_y(x) - \rho(x) h_c(x)$, so the conditional moment restrictions reduce to:
\begin{align*}
M^1_{\rho, h}(x) &= \mathbb E[\psi^1_{\rho, h}(W_i, Y_i, C_i) | X_i = x] = 0 \\
M^2_{\rho, h}(x) &= \mathbb E[\psi^2_{\rho, h}(W_i, Y_i, C_i) | X_i = x] = 0
\end{align*}
with
\begin{align*}
\psi^1_{\rho, h}(W_i, Y_i, C_i) &= W_i(Y_i - \rho C_i - h) \\
\psi^2_{\rho, h}(W_i, Y_i, C_i) &= Y_i - \rho C_i - h
\end{align*}

We need to verify the Assumptions 1--6 of \citet{athey2019generalized}, which will complete the proof. The requirement $\operatorname{Var}[\rho(X_i) \mid X_i = x] > 0$ stated in the text of Theorem 5 of \citet{athey2019generalized} is satisfied by Assumption \ref{assu:regularity}.

\paragraph{Verification of Assumption 1} 

The assumption states that the moments should be Lipschitz continuous in $x$. We can write both moments as linear combinations of conditional expectations and nuisance functions that are Lipschitz by Assumption \ref{as:lipschitz}:
\begin{align*}
  \label{eq:mom_for_grf}
  M^1_{\rho, h}(x) &= \mathbb E[W_i(Y_i - \rho(x) C_i - h(x)) \mid X_i = x] \\
  &= \mathbb E[W_iY_i \mid X_i = x] - \rho(x)\,\mathbb E[W_iC_i \mid X_i = x] - h(x)\,\mathbb E[W_i \mid X_i = x], \\
  M^2_{\rho, h}(x) &= \mathbb E[Y_i - \rho(x) C_i - h(x) \mid X_i = x] \\
  &= \mathbb E[Y_i \mid X_i = x] - \rho(x)\,\mathbb E[C_i \mid X_i = x] - h(x).
\end{align*}
Therefore, both moments are Lipschitz in $x$.

\paragraph{Verification of Assumption 2}

The assumption requires that for fixed $x$, the moment function $M(x, \theta, \nu)$ is twice continuously differentiable in $(\theta, \nu)$ with a uniformly bounded second derivative, and that $V(x) := \partial_{(\theta, \nu)} M(x, \theta, \nu)$ is invertible at the true parameters.

The moments are affine in $(\rho, h)$, hence twice continuously differentiable with zero second derivatives. The Jacobian with respect to $(\rho, h)$ at the true parameters is
\begin{equation}
\label{eq:jacobian_grf}
V(x)
= \partial_{(\rho, h)}(M^1_{\rho, h}(x), M^2_{\rho, h}(x))
= \begin{pmatrix}
-\EE{W_i C_i \mid X_i = x} & -\EE{W_i \mid X_i = x} \\
-\EE{C_i \mid X_i = x} & -1
\end{pmatrix}  
\end{equation}
The determinant is
\[
\det V(x)
= \Cov{C_i, W_i \mid X_i = x}.
\]
Under Assumptions \ref{as:c} and \ref{assu:unconf} we have
\[
\Cov{C_i, W_i \mid X_i = x} = h_w(x)(1-h_w(x))\tau_C(x) > 0.
\]
Thus $V(x)$ is invertible for any $x$.

\paragraph{Verification of Assumption 3}

Let $\xi = (\rho, h)$ and $\xi' = (\rho', h')$ be two sets of parameters. We need to show that $\|\text{Var}(\psi_{\xi}(W_i, Y_i, C_i) - \psi_{\xi'}(W_i, Y_i, C_i) \mid X_i = x)\|_2$ is bounded by $L \|\xi - \xi'\|^2$.
\begin{equation*}
\psi_{\xi} - \psi_{\xi'}
= -(\xi - \xi') \begin{pmatrix} W_i C_i & C_i \\ W_i & 1 \end{pmatrix}
\end{equation*}

Since the random variables $W_i, C_i$ are bounded (Assumption \ref{assu:regularity}), we can bound the difference uniformly in $X_i$:
$$ \|\psi_{\xi} - \psi_{\xi'}\| \leq K \|\xi - \xi'\| $$
for some constant $K$. Then
\begin{align*}
\text{Var}(\psi_{\xi} - \psi_{\xi'} \mid X_i = x) &\leq \mathbb E[\|\psi_{\xi} - \psi_{\xi'}\|^2 \mid X_i = x] \\
&\leq K^2 \|\xi - \xi'\|^2.
\end{align*}

\paragraph{Verification of Assumption 4}

The assumption requires that the score function can be decomposed as $\psi_{\xi}(O) = \lambda(\xi; O) + \zeta_{\xi}(g(O))$, where $\lambda$ is Lipschitz continuous in the parameters $\xi = (\rho, h)$, $g(O)$ is a scalar summary, and $\zeta$ is a monotone function.

The score function $\psi_{\xi} = (\psi^1_{\xi}, \psi^2_{\xi})$ is affine in the parameters $\rho$ and $h$:
\begin{align*}
\psi^1_{\xi}(W_i, Y_i, C_i) &= W_i Y_i - \rho W_i C_i - h W_i \\
\psi^2_{\xi}(W_i, Y_i, C_i) &= Y_i - \rho C_i - h
\end{align*}
For any value of $W_i$, $C_i$, and $Y_i$, the score functions are Lipschitz continuous in the parameters. Thus we can set $\zeta=0$ and $\lambda = \psi_{\xi}$, which satisfies the assumption.

\paragraph{Verification of Assumption 5}

The assumption states that for any weights $\alpha_i$ with $\sum_{i=1}^n \alpha_i = 1$, the estimating equation returns a minimizer $(\hrho, \hat{h})$ that at least approximately solves the equations: $\|\sum_{i=1}^n \alpha_i \psi_{\hrho, \hat{h}}(O_i)\|_2 \leq C \max\{\alpha_i\}$ for some constant $C \geq 0$.

IV moment conditions always have at least one exact solution $(\hrho, \hat{h})$ for any weights $\alpha$, therefore $\sum_{i=1}^n \alpha_i \psi_{\hrho, \hat{h}}(O_i) = 0$. Thus, the condition is satisfied with $C = 0$.

\paragraph{Verification of Assumption 6}

The assumption states that 1) the functions $\psi$ are subdifferentials of some convex function, while 2) the moments $M$ are derivatives of some strongly convex function. Looking at the Jacobian \eqref{eq:jacobian_grf}, we can see that it is not symmetric, therefore this assumption must fail.

However, the result of Theorem 5 of \citet{athey2019generalized} still applies, because this assumption is only used in the proof of Lemma 10 of \citet{athey2019generalized}, which is used to prove the consistency result stated in Theorem 3 of \citet{athey2019generalized}. We can verify this directly without relying on the assumption.

To state the result from Lemma 10 of \citet{athey2019generalized}, we reiterate some of their notation:
\[
\Psi(\rho, h) := \sum_{i=1}^n \alpha_i(x) \psi_{\rho, h}(O_i) \quad \text{and} \quad \overline{\Psi}(\rho, h) := \sum_{i=1}^n \alpha_i(x) M_{\rho, h}(X_i).
\]
Substituting the definitions of $\psi$ and $M$, and writing $\xi := (\rho, h)'$, we can rewrite $\Psi(\rho, h)$ as follows:
\begin{equation}
\Psi(\xi) = b_n - A_n \xi
\end{equation}
\begin{equation}
b_n := \sum_{i=1}^n \alpha_i(x) \begin{pmatrix} W_i Y_i \\ Y_i \end{pmatrix}, \quad A_n := \sum_{i=1}^n \alpha_i(x) \begin{pmatrix} W_i C_i & W_i \\ C_i & 1 \end{pmatrix}.
\end{equation}
Similarly, $\bar \Psi(\rho, h)$ can be written as
\begin{equation}
\overline{\Psi}(\xi) = \overline{b}_n - \overline{A}_n \xi
\end{equation}
\begin{equation}
\overline{b}_n := \sum_{i=1}^n \alpha_i(x) \begin{pmatrix} \EE{W_i Y_i \mid X_i} \\ \EE{Y_i \mid X_i} \end{pmatrix}, \quad \overline{A}_n := \sum_{i=1}^n \alpha_i(x) \begin{pmatrix} \EE{W_i C_i \mid X_i} & \EE{W_i \mid X_i} \\ \EE{C_i \mid X_i} & 1 \end{pmatrix}.
\end{equation}

Now we need to show that all approximate solutions to $\Psi(\xi) = 0$ are close to each other in the following sense: for any sequence $\epsilon_n > 0$ with $\lim_{n \to \infty} \epsilon_n = 0$,
\begin{equation}
\tag{34}
\sup \left\{ \|\xi-\xi'\|_2 : \|\Psi(\xi)\|_2, \|\Psi(\xi')\|_2 < \epsilon_n \right\} \to_p 0.
\end{equation}
Here is the proof of this statement in our setting:
\[
\|\bar A_n(\xi-\xi')\|_2 \leq \|A_n(\xi-\xi')\|_2 + \|(\bar A_n - A_n)(\xi-\xi')\|_2
\]
$\|A_n(\xi-\xi')\|_2 = \|\Psi(\xi)-\Psi(\xi')\|_2$ is bounded by $2\epsilon_n$ by the premise of the Lemma. $\|(\bar A_n - A_n)(\xi-\xi')\|_2 = o_p(1)$ by Lemma 9 of \cite{athey2019generalized} under Assumptions 1–4 of \cite{athey2019generalized}, which are verified above.

Therefore, to conclude that $\|\xi-\xi'\|_2\to_p 0$, it is enough to show that the smallest singular value of $\bar A_n$ is bounded away from $0$ with probability tending to $1$.

We will show that in two steps. First, at the point $x$ the Jacobian is not singular, as verified in \eqref{eq:jacobian_grf}, which means that there is some positive constant that bounds the singular values from below. By Assumption \ref{as:lipschitz} the Jacobian is Lipschitz continuous, so there exists a neighborhood $B$ of $x$, such that on the set $B$ the singular values of the Jacobian are uniformly bounded from below with some positive constant. Secondly, by Theorem 3 of \citet{wager2018estimation} the weights are localized: $\mathbb{E}\!\left[\sup\left\{\|X_i-x\|_2:\alpha_i(x)>0\right\}\right]
= o(1)$. Therefore, in a large enough sample the weights $\alpha$ put a non-zero mass on $B$, and $\bar A_n$ singular values are bounded from below with some positive constant with probability tending to 1.

\subsection*{Proof of Theorem \ref{thm:Qexp}}

We first prove a couple of useful Lemmas.

\begin{lemma}
The estimated threshold converges to the true threshold $\hat s(b) \rightarrow_p s(b)$ and has an asymptotically linear representation:
\[ \sqrt n (\hat s(b) - s(b)) = - \frac{\sqrt n}{ G'_{\hat \rho}(s(b))} \Big (\hat G_{\hat \rho}(s(b)) - G_{\hat \rho}(s(b)) \Big) + o_p(1). \]
\label{lem:thresconvg}
\end{lemma}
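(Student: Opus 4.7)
The plan is to prove the lemma in two stages: first establish consistency $\hat s(b) \to_p s(b)$, and then derive the asymptotic linear representation via a first-order Taylor expansion of $B$ combined with a stochastic equicontinuity argument for $\hat B - B$ near $s(b)$.

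For consistency, I would first argue that $\hat B(s) - B(s) \to_p 0$ uniformly over $s \in [0,1]$. The indexing class
\[
\mathcal{F} = \left\{ (w, y, c, x) \mapsto \left(\tfrac{w}{\pi(x)} - \tfrac{1-w}{1 - \pi(x)}\right) c\, \mathbf{1}\{S(x) \ge s\} : s \in [0,1] \right\}
\]
is VC-subgraph (indexed by a one-dimensional threshold) with a square-integrable envelope under mild moment conditions on $C$ and overlap on $\pi$, hence Glivenko-Cantelli. Combined with strict monotonicity of $B(\cdot)$ (which follows from Assumption \ref{as:c} together with positivity of the score density at $s(b)$) and the theorem's assumption $\hat B(\hat s(b)) - b = o_p(n^{-1/2})$, an argmax-style continuity argument together with $b = B(s(b))$ yields $B(\hat s(b)) = B(s(b)) + o_p(1)$, and inverting via continuity of $B^{-1}$ gives $\hat s(b) \to_p s(b)$.

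For the asymptotic linear expansion, I would decompose
\[
\hat B(\hat s(b)) - B(s(b)) = \bigl[\hat B(s(b)) - B(s(b))\bigr] + \bigl[B(\hat s(b)) - B(s(b))\bigr] + R_n,
\]
where $R_n := [\hat B(\hat s(b)) - \hat B(s(b))] - [B(\hat s(b)) - B(s(b))]$ is the remainder from the empirical process. Since $\mathcal{F}$ is in fact Donsker, and $\hat s(b) \to_p s(b)$, asymptotic equicontinuity of the empirical process indexed by $s$ gives $R_n = o_p(n^{-1/2})$. Using $\hat B(\hat s(b)) - b = o_p(n^{-1/2})$ and $b = B(s(b))$, the left-hand side is $o_p(n^{-1/2})$. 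A first-order Taylor expansion using continuous differentiability of $B$ at $s(b)$ yields $B(\hat s(b)) - B(s(b)) = B'(s(b))\,(\hat s(b) - s(b)) + o_p(|\hat s(b) - s(b)|)$. Substituting, dividing by $B'(s(b))$ (nonzero by Assumption \ref{as:c}), and rearranging produces the claimed representation.

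The main obstacle I anticipate is the stochastic equicontinuity step bounding $R_n$. Everything else (Taylor expansion, algebra, inversion of $B$) is routine given continuous differentiability and strict monotonicity; the real care lies in verifying that the threshold-indexed IPW class is Donsker with an appropriate envelope so that a maximal inequality delivers $R_n = o_p(n^{-1/2})$ uniformly over a shrinking neighborhood of $s(b)$. A secondary subtlety is that $\hat s(b)$ is defined via a (possibly set-valued) approximate inverse, so the argument should not require $\hat B$ to be continuous — using the theorem's approximate-inverse assumption explicitly avoids this issue.
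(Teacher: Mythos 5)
Your proposal is correct and follows essentially the same route as the paper: the paper establishes consistency via the Z-estimator consistency theorem (van der Vaart, Theorem 5.9) using a uniform law of large numbers over the threshold-indexed class and the well-separated zero implied by strict monotonicity of $B$, and obtains the linear representation from van der Vaart's Theorem 5.21 after verifying the Donsker/asymptotic-equicontinuity condition of Lemma 19.24 --- which is exactly your decomposition with $R_n = o_p(n^{-1/2})$ written out by hand. The one step to make explicit is that asymptotic equicontinuity controls $R_n$ only once the $L_2(P)$ distance between the indexing functions at $\hat s(b)$ and $s(b)$ is shown to vanish, which requires continuity (bounded density) of the score distribution near $s(b)$ in addition to the Donsker property; the paper verifies this separately as Condition 2 of its helper lemma, and you should state it alongside your VC/envelope argument.
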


\begin{proof} We can define $\hat s(b)$ as a Z-estimator, where it is the possibly non-unique and approximate solution to
\[ \hat G_{\hat \rho}(\hat s(b)) - b = 0.  \]
We can then use Theorem 5.9 of \citet{van2000asymptotic} to prove that $\hat s(b) \rightarrow_p s(b)$.  Using this Lemma requires verifying two conditions:

First, the uniform convergence of $\hat G_{\hat \rho}(s) - b$ to $G_{\hat \rho}(s) - b$ follows from Lemma 2.4 of \citet{newey1994large}. We have continuity with probability 1 in $s$ and boundedness of $G_i(s) = \left (\frac{W_i}{p} -  \frac{1 - W_i}{1- p} \right) C_i \mathbbm {1} (S_i \geq s)$ in $s$ (given that $S_i$ is continuously distributed) and that $s$ is an element of a compact space. $||\cdot||_2$ is the $L_2$ norm. 
\[ \sup_{ s \in \mathcal S}  \| \hat G_{\hat \rho}(s)  - G_{\hat \rho}(s)  \|_2  \rightarrow_p 0. \]

Next, we note that $G_{\hat \rho}(s) - b$ is continuous in $s$, $s \in \mathcal S$, which is a compact space, and $G_{\hat \rho}(s) - b$ has a unique zero at $s(b)$ since $G_{\hat \rho}(s)$ is strictly monotonic, so has an inverse. To show that $G_{\hat \rho}(s)$ is strictly monotonic, note that $G'_{\hat \rho}(s) = - f(s)E[C_i(1) - C_i(0) | S_i =s ] <0$ by the assumption that $f(s) > 0$ and Assumption \ref{as:c}. 

 This shows the second condition of Theorem 5.9 of \citet{van2000asymptotic} (see Problem 5.27):
\[ \inf_{s: d(s, s(b)) \geq \epsilon} \| G_{\hat \rho}(s) - b \|_2 > 0 =  \|  G_{\hat \rho} (s(b)) - b  \|_2.  \] 

We have now verified the conditions of Theorem 5.9 and shown that $\hat s(b) \rightarrow_p s(b)$.
\end{proof}

\begin{lemma}
The following convergence in probability holds:
\begin{enumerate}
\item $\sqrt n\left( [\hat V_{\hat \rho}(\hat s(b)) - V_{\hat \rho}(\hat s(b))] - [\hat V_{\hat \rho}( s(b)) - V_{\hat \rho}(s(b))] \right) \rightarrow_p 0$
\item $\sqrt n\left( [\hat G_{\hat \rho}(\hat s(b)) - G_{\hat \rho}(\hat s(b))] - [\hat G_{\hat \rho}( s(b)) - G_{\hat \rho}(s(b))] \right) \rightarrow_p 0$
\end{enumerate}
\label{lem:helper}
\end{lemma}

\begin{proof}
We use Lemma 19.24 of \citet{van2000asymptotic}. Given that we have shown in the previous Lemma that $\hat s(b) \rightarrow_p s(b)$, then the convergence in probability that we require holds as long as the following two conditions hold:
 \begin{enumerate}
 \item Define the function classes
 \[ \mathcal F^V = \left \{  (W_i, X_i, Y_i, S_i) \mapsto  \left ( \frac{W_i}{h_w(X_i)} - \frac{ 1- W_i}{ 1- h_w(X_i)} \right) Y_i \mathbbm {1}(S_i \geq s) : s \in \mathcal S \right \}, \]
 \[ \mathcal F^G = \left \{ (W_i, X_i, C_i, S_i) \mapsto   \left ( \frac{W_i}{h_w(X_i)} - \frac{ 1- W_i}{ 1- h_w(X_i)} \right) C_i \mathbbm {1}(S_i \geq s) : s \in \mathcal S \right \}. \]
$\mathcal F^V$ and $\mathcal F^G$ are $P$-Donsker, where $P$ defines the probability distribution of $S_i, W_i, Y_i, C_i$.
 \item $\mathbb E\left [ \Big( V_{\hat \rho}(\hat s(b) ) - V_{\hat \rho}(s(b))\Big)^2  \right ] \rightarrow_p 0 $ and $\mathbb E\left [ \Big( G_{\hat \rho}(\hat s(b) ) - G_{\hat \rho}(s(b))\Big)^2  \right ] \rightarrow_p 0$.
 \end{enumerate}

\textbf{Showing Condition 1.}

Both function classes are $P$-Donsker by overlap and the boundedness of $Y_i$ and $C_i$, and the fact that indicator functions are a Donsker class, by a bracketing argument (see, for example, Example 19.6 of \citet{van2000asymptotic}). 

\textbf{Showing Condition 2.}

Condition 2 follows from the convergence in probability of $\hat s(b)$ since $V_{\hat \rho}(s)$ and $G_{\hat \rho}(s)$ are continuous in $s$, and bounded (by the dominated convergence theorem). 

 \end{proof}

Next, for the asymptotically linear representation, we use  Theorem 5.21 of \citet{van2000asymptotic}. Lemma \ref{lem:helper} gives the required asymptotic expansion. By the continuous differentiability and strict monotonicity of $G_{\hat \rho}(s)$, we meet the required differentiability condition and that $G'_{\hat \rho}(s(b)) \neq 0$. 

Now that we have verified these conditions, then : 
\[ \sqrt n (\hat s(b) - s(b)) = \frac{- \sqrt n}{ G'_{\hat \rho}(s(b))} \Big (\hat G_{\hat \rho}(s(b)) - G_{\hat \rho}(s(b)) \Big) + o_p(1). \]

The following expansion holds for $\hat Q_{\hat \rho}(b)$ under the Assumptions of Theorem \ref{thm:Qexp} 
\begin{align}
  \hat Q_{\hat \rho}(b) - Q_{\hat \rho}(b) & = \hat V_{\hat \rho}(\hat s(b)) - V_{\hat \rho}(s(b))  \\
& =  \hat V_{\hat \rho}(\hat s(b)) - V_{\hat \rho}(\hat s(b)) + V_{\hat \rho}(\hat s(b)) - V_{\hat \rho}(s(b)) \label{eq:sq2} \\
& = \hat V_{\hat \rho}( s(b)) - V_{\hat \rho}( s(b)) + V_{\hat \rho}(\hat s(b)) - V_{\hat \rho}(s(b)) +   o_p\left(n^{-0.5} \right) \label{eq:sq3}
\end{align}
For  (\ref{eq:sq3}), we applied Lemma \ref{lem:helper}. Next, since we have that $V_{\hat \rho}(s)$ is continuously differentiable in $s$, we can use the mean-value form of a Taylor Expansion of $V_{\hat \rho}(\hat s(b))$ around $s(b)$ to show that: 

\[ V_{\hat \rho} (\hat s(b)) - V_{\hat \rho} (s(b)) = V_{\hat \rho}'(s) (\hat s(b) - s(b)) + o_p(n^{-0.5}). \]

Next, we plug in the expansion from Lemma \ref{lem:thresconvg} for an expansion for $\hat Q_{\hat \rho}(b)$: 
\[ \hat V_{\hat \rho} (\hat s(b)) - V_{\hat \rho} (s(b)) =  \hat V_{\hat \rho}( s(b)) - V_{\hat \rho}(s(b)) - \frac{V_{\hat \rho}'(s(b)) }{G_{\hat \rho}'(s(b)) } \Big( \hat G_{\hat \rho}(s(b)) - G_{\hat \rho}(s(b)) \Big )+ o_p(n^{-0.5}), \]

The RHS of the expression for $\hat Q_{\hat \rho}(b)$ is an i.i.d. average  with finite variance so the central limit theorem applies and $\hat Q_{\hat \rho}(b)$ is asymptotically normal. 

Next we convert $\hat \Delta$ to an asymptotically linear representation.
\[ \hat \Delta_{\hat \rho}(b) = \hat Q_{\hat \rho}(b) - b \frac{ \hat \tau_y}{ \hat \tau_c}. \]
Let $f(x, y) =b \frac{x}{ y}$. Take a Taylor expansion of $f(\hat \tau_y, \hat \tau_c)$ around $(\tau_y, \tau_c)$, recognizing that $(\hat \tau_y - \tau_y) = O_p(n^{-1/2})$ and $(\hat \tau_c - \tau_c) = O_p(n^{-1/2})$ by the CLT, given i.i.d. and bounded outcomes and costs: 

\[ \hat \Delta_{\hat \rho}(b) - \Delta_{\hat \rho}(b) = \hat Q_{\hat \rho}(b) - Q_{\hat \rho}(b) - b \frac{\hat \tau_y - \tau_y}{\tau_c} + b \frac{ \tau_y (\hat \tau_c - \tau_c)} {\tau_c^2} + o_p(n^{-0.5}). \]
 This now gives an expression for $\hat \Delta_{\hat \rho}(b)$ in terms of an i.i.d. average which is asymptotically normal.

\section{Additional Tables and Figures}

\label{appndx:empirical}

\begin{table}[t] \centering 
  \resizebox{\textwidth}{!}{
\begin{tabular}{|l|ccc|ccc|}
\hline
 & $\hat \Delta$ & Standard deviation & Coverage & $\hat B$ & standard deviation & \% violations \\
\hline
Instrumental Forest & $0.496$ & $0.081$ & $0.966$ & $1.000$ & $0$ & $0$ \\
Linear Parametric & $0.494$ & $0.081$ & $0.970$ & $1.000$ & $0$ & $0$ \\
Direct Ratio & $0.412$ & $0.079$ & $0.962$ & $1.000$ & $0$ & $0$ \\
Ignore Costs & $0.179$ & $0.078$ & $0.968$ & $1.000$ & $0$ & $0$ \\
\cite{hoch2002something} & $0.497$ & $$ & $$ & $0.996$ & $0.133$ & $1.000$ \\
\cite{sun2021empirical} & $0.332$ & $$ & $$ & $0.789$ & $0.100$ & $0.996$ \\ 
\hline
\end{tabular}
}
\vspace{0.2\baselineskip}

  \caption{The table shows the performance of different methods in the partially predictable costs simulation, under a budget constraint of 1, trained on a sample of 500 observations. $\hat \Delta$ is the estimated lift of the reward over the uniform allocation  from a sample of 1000 individuals, averaged over 500 simulation replicates. The next column shows the half-sample bootstrapped (1,000 bootstrap samples) standard deviations of $\hat \Delta$ averaged across 500 simulation replications and the coverage of the $(\hat \Delta - 1.96 \operatorname{se}(\hat \Delta), \hat \Delta + 1.96 \operatorname{se}(\hat \Delta))$ confidence interval, where the ground truth was computed via simulation.  Standard errors for direct optimization methods are not currently available in the literature. We also report the average budget spent and its standard deviation, as well as the percentage of simulation replicates for which the budget spent in the deployment set is higher than 1.  } 
  \label{tab:simu} 
\end{table}

\begin{table}[!ht]
    \centering
    \resizebox{\textwidth}{!}{
    \begin{tabular}{|l|l|l|}
    \hline
    Variable name & Variable description & Included in parametric \\
    \hline \hline
    numhh\_list & Number of people in household on lottery list & No \\ \hline
    birthyear\_list & Birth year: lottery list data & Yes \\ \hline
    have\_phone\_list & Gave a phone number on lottery sign up: lottery list data & No \\ \hline
    english\_list & Individual requested english-language materials: lottery list data & No \\ \hline
    female\_list & Female: lottery list data & No \\ \hline
    first\_day\_list & Signed up for lottery list on first day: lottery list data & No \\ \hline
    last\_day\_list & Signed up for lottery list on last day: lottery list data & No \\ \hline
    pobox\_list & Gave a PO Box as an address: lottery list data & No \\ \hline
    self\_list & Individual signed him or herself up for the lottery list & No \\ \hline
    zip\_msa\_list & Zip code from lottery list is a metropolitan statistical area & No \\ \hline
    snap\_ever\_presurvey12m & Ever personally on SNAP, 6 month pretreatment & No \\ \hline
    snap\_tot\_hh\_presurvey12m & Total household benefits from SNAP, 6 month pretreatment & Yes \\ \hline
    tanf\_ever\_presurvey12m & Ever personally on TANF, 6 month pretreatment & No \\ \hline
    tanf\_tot\_hh\_presurvey12m & Total household benefits from TANF, 6 month pretreatment & No \\ \hline
    any\_visit\_pre\_ed & Any ED visit,  & No \\ \hline
    any\_hosp\_pre\_ed & Any ED visit resulting in a hospitalization  & No \\ \hline
    any\_out\_pre\_ed & Any Outpatient ED visit  & No \\ \hline
    any\_on\_pre\_ed & Any weekday daytime ED visit  & No \\ \hline
    any\_off\_pre\_ed & Any weekend or nighttime ED visits & No \\ \hline
    num\_edcnnp\_pre\_ed & Number of emergent, non-preventable ED visits  & No \\ \hline
    num\_edcnpa\_pre\_ed & Number of emergent, preventable ED visits  & No \\ \hline
    num\_epct\_pre\_ed & Number of primary care treatable ED visits  & No \\ \hline
    num\_ne\_pre\_ed & Number of non-emergent ED visits  & No \\ \hline
    num\_unclas\_pre\_ed & Number of of unclassified ED visits  & No \\ \hline
    any\_acsc\_pre\_ed & Any ambulatory case sensitive ED visit & No \\ \hline
    any\_chron\_pre\_ed & Any ED visit for chronic condition & No \\ \hline
    any\_inj\_pre\_ed & Any ED visit for injury  & No \\ \hline
    any\_skin\_pre\_ed & Any ED visit for skin conditions  & No \\ \hline
    any\_abdo\_pre\_ed & Any ED visit for abdominal pain  & No \\ \hline
    any\_back\_pre\_ed & Any ED visit for back pain  & No \\ \hline
    any\_heart\_pre\_ed & Any ED visit for chest pain  & No \\ \hline
    any\_head\_pre\_ed & Any ED visit for headache  & No \\ \hline
    any\_depres\_pre\_ed & Any ED visit for mood disorders  & No \\ \hline
    any\_psysub\_pre\_ed & Any ED visit for psych conditions/substance abuse  & No \\ \hline
    charg\_tot\_pre\_ed & Sum of total charges & Yes \\ \hline
    ed\_charg\_tot\_pre\_ed & Sum of total ED charges  & Yes \\ \hline
    any\_hiun\_pre\_ed & Any ED visit to a high uninsured volume hospital  & No \\ \hline
    any\_loun\_pre\_ed & Any ED visit to a low uninsured volume hospital  & No \\ \hline
    need\_med\_0m & Survey data: Needed medical care in the last six months & No \\ \hline
    need\_rx\_0m & Survey data: Needed prescription medications in the last six months & No \\ \hline
    rx\_num\_mod\_0m & Survey data: Number of prescription medications currently taking & Yes \\ \hline
    rx\_any\_0m & Survey data: Currently taking any prescription medications & Yes \\ \hline
    need\_dent\_0m & Survey data: Needed dental care in the last six months & Yes \\ \hline
    doc\_any\_0m & Survey data: Any primary care visits & No \\ \hline
    doc\_num\_mod\_0m & Survey data: Number of primary care visits, truncated & Yes \\ \hline
    er\_any\_0m & Survey data: Any ER visits & No \\ \hline
    er\_num\_mod\_0m & Survey data: Number of ER visits, truncated & No \\ \hline
    er\_noner\_0m & Survey data: Used emergency room for non-emergency care & No \\ \hline
    reason\_er\_need\_0m & Survey data: Went to ER (reason): needed emergency care & No \\ \hline
    \end{tabular}}
\end{table}

\begin{table}[!ht]
    \centering
    \resizebox{\textwidth}{!}{
    \begin{tabular}{|l|l|l|}
    \hline
    Variable name & Variable description & Included in parametric \\
    \hline \hline
    reason\_er\_closed\_0m & Survey data: Went to ER (reason): clinics closed & No \\ \hline
    reason\_er\_apt\_0m & Survey data: Went to ER (reason): couldn't get doctor's appointment & No \\ \hline
    reason\_er\_doc\_0m & Survey data: Went to ER (reason): didn't have personal doctor & No \\ \hline
    reason\_er\_copay\_0m & Survey data: Went to ER (reason): couldn't afford copay to see a doctor  & No \\ \hline
    reason\_er\_go\_0m & Survey data: Went to ER (reason): didn't know where else to go & No \\ \hline
    reason\_er\_other\_0m & Survey data: Went to ER (reason): other reason & No \\ \hline
    reason\_er\_rx\_0m & Survey data: Went to ER (reason): needed prescription drug & No \\ \hline
    reason\_er\_dont\_0m & Survey data: Went to ER (reason): don't know & No \\ \hline
    hosp\_any\_0m & Survey data: Any hospital visits & No \\ \hline
    hosp\_num\_mod\_0m & Survey data: Number hospital visits, truncated at 2*99th\%ile & No \\ \hline
    total\_hosp\_0m & Survey data: Total days spent in hospital, last 6 months & Yes \\ \hline
    dia\_dx\_0m & Survey data: Diagnosed diabetes & No \\ \hline
    ast\_dx\_0m & Survey data: Diagnosed asthma & No \\ \hline
    hbp\_dx\_0m & Survey data: Diagnosed high blood pressure & No \\ \hline
    emp\_dx\_0m & Survey data: Diagnosed COPD & No \\ \hline
    chf\_dx\_0m & Survey data: Diagnosed congestive heart failure & No \\ \hline
    dep\_dx\_0m & Survey data: Diagnosed depression or anxiety & Yes \\ \hline
    female\_0m & Survey data: Is female & No \\ \hline
    birthyear\_0m & Survey data: Birth year & No \\ \hline
    employ\_0m & Survey data: Currently employed & No \\ \hline
    employ\_det\_0m & Survey data: Currently employed or self-employed & Yes \\ \hline
    hhinc\_cat\_0m & Survey data: Household income category & Yes \\ \hline
    employ\_hrs\_0m & Survey data: Average hrs worked/week & Yes \\ \hline
    edu\_0m & Survey data: Highest level of education completed & Yes \\ \hline
    living\_arrange\_0m & Survey data: Current living arrangement & Yes \\ \hline
    hhsize\_0m & Survey data: Household Size (adults and children) & Yes \\ \hline
    hhinc\_pctfpl\_0m & Survey data: Household income as percent of federal poverty line & Yes \\ \hline
    num19\_0m & Survey data: Number of family members under 19 living in house & Yes \\ \hline
    preperiod\_any\_visits & Any ED visit (the date range is different from any\_visit\_pre\_ed) & No \\ \hline
    \end{tabular}}
    \caption{List of variables used as pre-treatment covariates in the Oregon Health Experiment application}
\label{table:vars}
\end{table}
\newpage

\label{appndx:n}
\begin{figure}
\centering
\begin{tabular}{cc}
\includegraphics[width=0.4\textwidth]{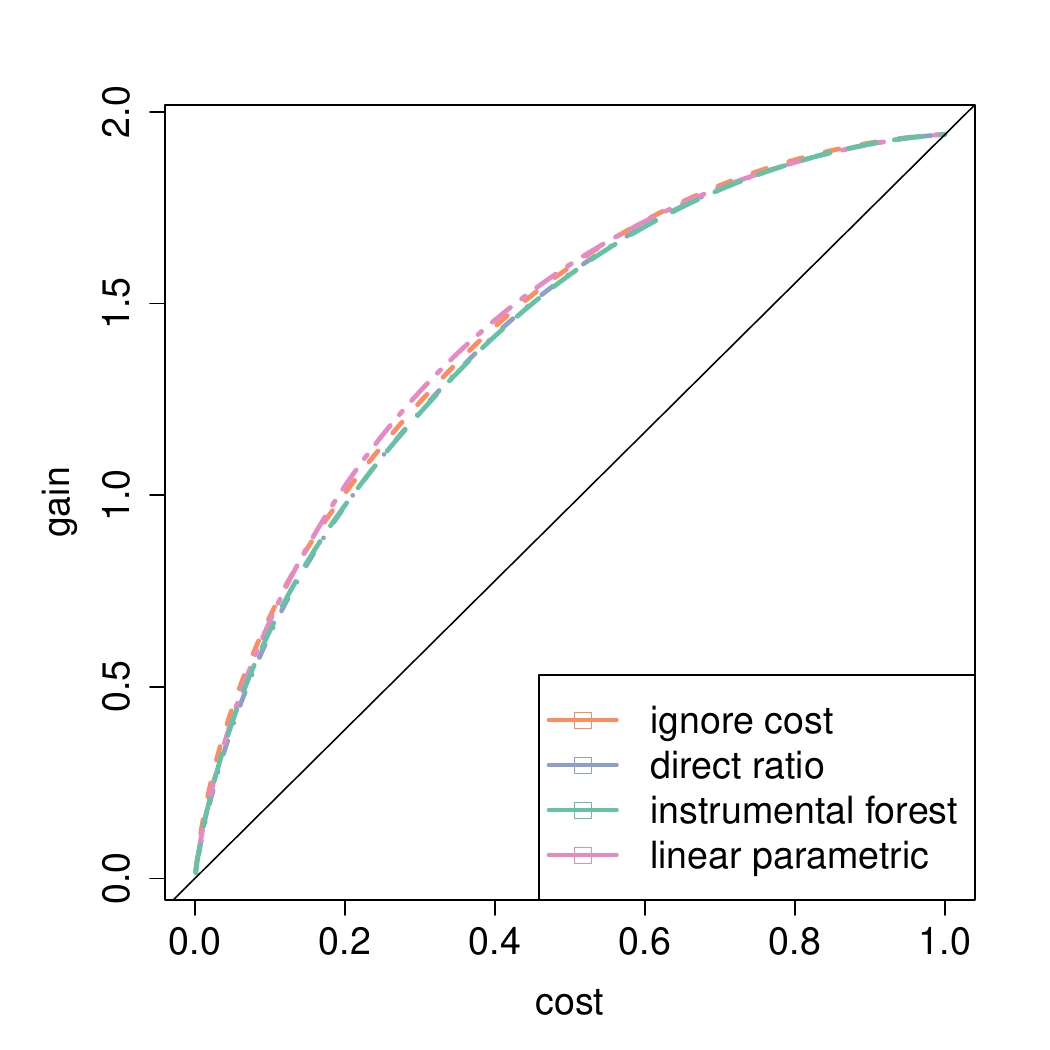} &
\includegraphics[width=0.4\textwidth]{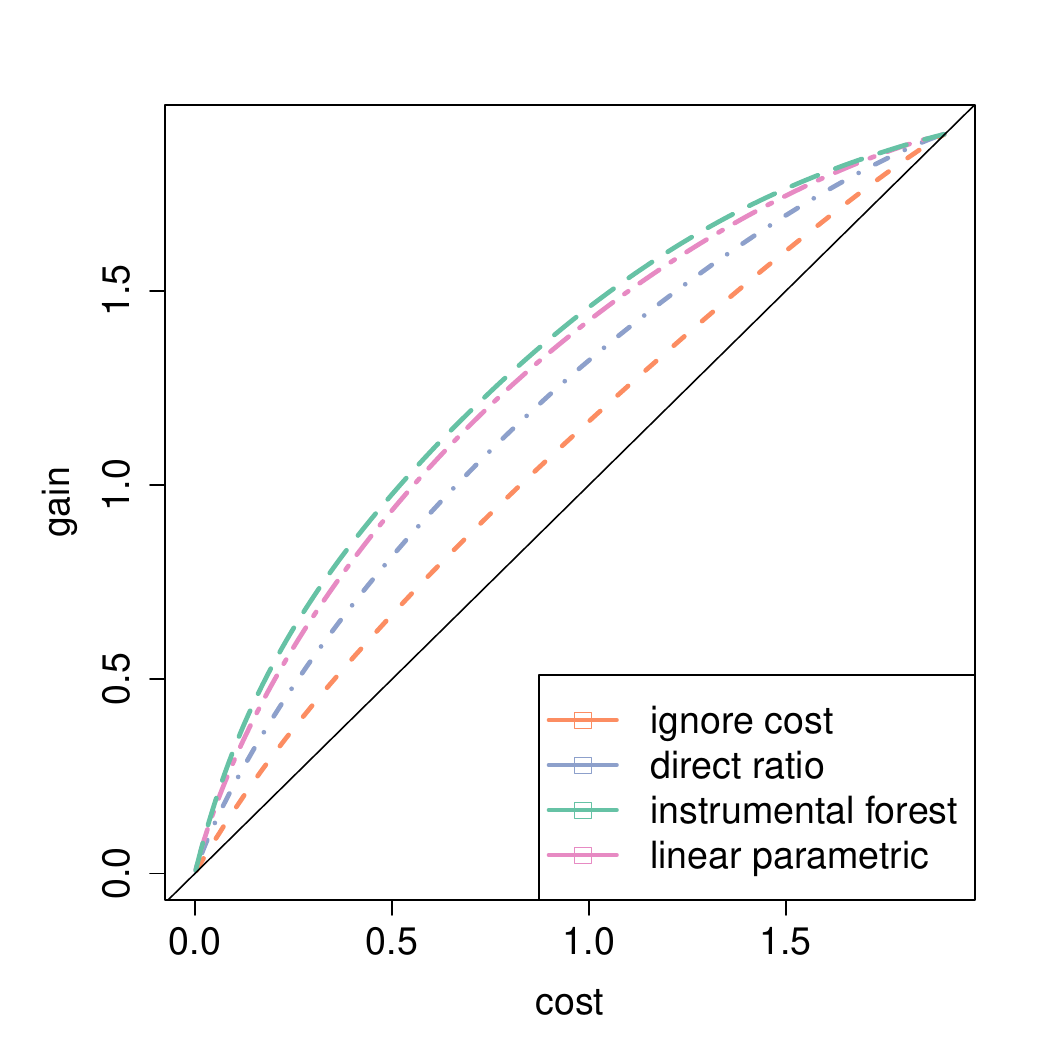} \\
unpredictable cost, $n=200$ & partially predictable cost, $n=200$ \\
\includegraphics[width=0.4\textwidth]{figures/simu_nocost_500.pdf} &
\includegraphics[width=0.4\textwidth]{figures/simu_main_500.pdf} \\
unpredictable cost, $n=500$ & partially predictable cost, $n=500$ \\
\includegraphics[width=0.4\textwidth]{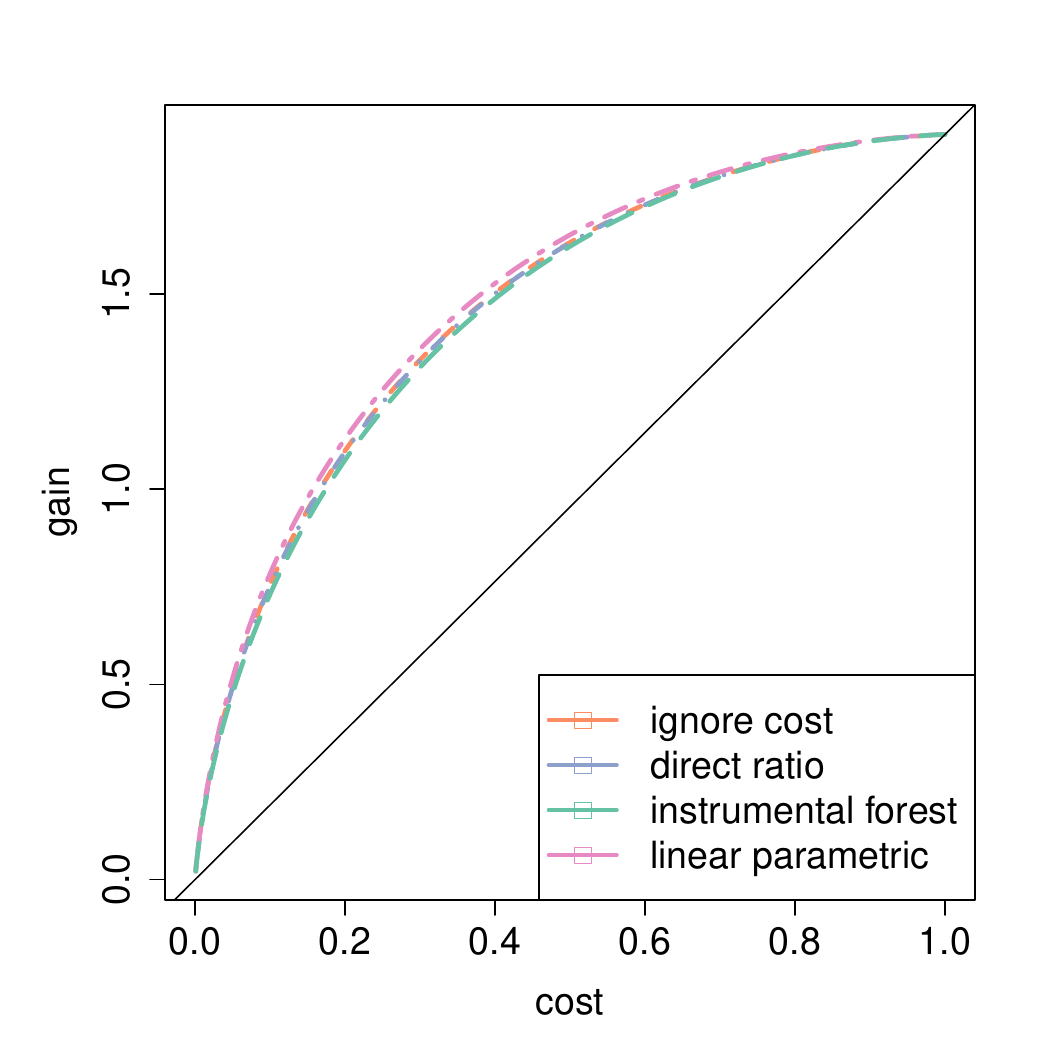} &
\includegraphics[width=0.4\textwidth]{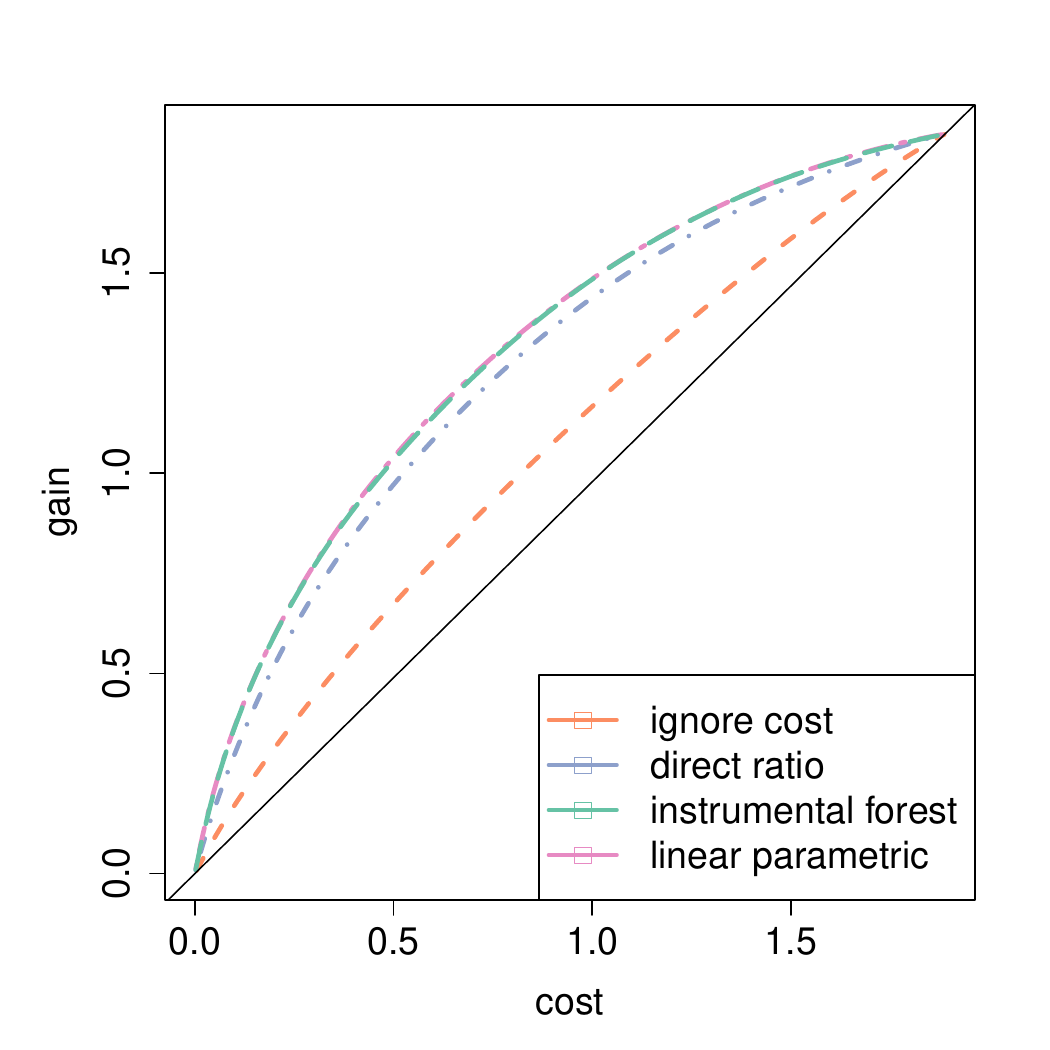} \\
unpredictable cost, $n=1000$ & partially predictable cost, $n=1000$
\end{tabular}
\caption{QINI curves for the simulation settings with predictable and unpredictable costs, averaged over
500 simulation replicates. Comparing to Figure \ref{fig:simu} we vary the training samples between $n \in \{200, 500, 1000\}$.}
\label{fig:simu_appndx}
\end{figure}

\newpage
\begin{table}[h!] \centering 
  \resizebox{\textwidth}{!}{
\begin{tabular}{|l|cccc|cc|c|} 
\hline 
 & $\hat \Delta$ & Standard deviation & Coverage & $\hat B$ & standard deviation & \% violations & training sample\\
\hline
Instrumental forest & $0.507$ & $0.081$ & $0.980$ & $1.000$ & $0$ & $0$ & $1,000$ \\
Linear parametric & $0.510$ & $0.081$ & $0.974$ & $1.000$ & $0$ & $0$ & $1,000$ \\
Direct Ratio & $0.457$ & $0.080$ & $0.976$ & $1.000$ & $0$ & $0$ & $1,000$ \\
Ignore Costs & $0.197$ & $0.077$ & $0.978$ & $1.000$ & $0$ & $0$ & $1,000$ \\
\cite{hoch2002something} & $0.509$ & $$ & $$ & $1.000$ & $0.134$ & $1.000$ & $1,000$ \\
\cite{sun2021empirical} & $0.344$ & $$ & $$ & $0.796$ & $0.104$ & $1.000$ & $1,000$ \\ 
\hline
Instrumental forest & $0.496$ & $0.081$ & $0.966$ & $1.000$ & $0$ & $0$ & $500$ \\
Linear parametric & $0.494$ & $0.081$ & $0.970$ & $1.000$ & $0$ & $0$ & $500$ \\
Direct Ratio & $0.412$ & $0.079$ & $0.962$ & $1.000$ & $0$ & $0$ & $500$ \\
Ignore Costs & $0.179$ & $0.078$ & $0.968$ & $1.000$ & $0$ & $0$ & $500$ \\
\cite{hoch2002something} & $0.497$ & $$ & $$ & $0.996$ & $0.133$ & $1.000$ & $500$ \\
\cite{sun2021empirical} & $0.332$ & $$ & $$ & $0.789$ & $0.100$ & $0.996$ & $500$ \\ 
\hline
Instrumental forest & $0.460$ & $0.080$ & $0.970$ & $1.000$ & $0$ & $0$ & $200$ \\
Linear parametric & $0.432$ & $0.080$ & $0.978$ & $1.000$ & $0$ & $0$ & $200$ \\
Direct Ratio & $0.321$ & $0.079$ & $0.970$ & $1.000$ & $0$ & $0$ & $200$ \\
Ignore Costs & $0.153$ & $0.079$ & $0.964$ & $1.000$ & $0$ & $0$ & $200$ \\
\cite{hoch2002something} & $0.475$ & $$ & $$ & $1.010$ & $0.147$ & $1.000$ & $200$ \\
\cite{sun2021empirical} & $0.308$ & $$ & $$ & $0.804$ & $0.115$ & $0.998$ & $200$ \\ 
\hline
\end{tabular}
}
  
  \caption{The table shows the performance of different methods in the partially predictable costs simulation, under a budget constraint of 1 and for training sample $n \in \{200, 500, 1000\}$. $\hat \Delta$ is the estimated lift of the reward over the uniform allocation from a sample of 1000 individuals, averaged over 500 simulation replicates. The next column shows the half-sample bootstrapped (1,000 bootstrap samples) standard deviation of $\hat \Delta$ averaged across 500 simulation replications and the coverage of the $(\hat \Delta - 1.96 \operatorname{se}(\hat \Delta), \hat \Delta + 1.96 \operatorname{se}(\hat \Delta))$ confidence interval, where the ground truth was computed via simulation.  Standard errors for direct optimization methods are not currently available in the literature. We also report the average budget spent and its standard deviation, as well as the percentage of simulation replicates for which the budget spent in the deployment set is higher than 1.} 
  \label{tab:simu_apndx} 

\end{table}

\newpage
\begin{table} \centering 
\begin{tabular}{lcc}
\hline
& \multicolumn{2}{c}{\textit{Cost variable:}} \\
\cline{2-3}
& Medications & Outpatient visits \\
& (1) & (2) \\
\hline
Instrumental forest & 0.0178 & 0.0083 \\
Direct ratio & 0.0131 & 0.0070 \\
Ignore costs & 0.0071 & 0.0063 \\
Linear parametric & 0.0145 & 0.0025 \\
\hline
\end{tabular}
\caption{Area under the curve metric. It is calulated as the area between the uniform allocation line and the QINI curve of the respective metric.}
\label{table:roc}
\end{table}

\end{document}